\documentclass[a4paper, 10pt]{article}

\setlength{\oddsidemargin}{0cm} 
\setlength{\evensidemargin}{0cm}
\setlength{\textwidth}{150mm}
\setlength{\textheight}{230mm}

\usepackage{amsmath, amssymb, amsthm}
\usepackage[CJKbookmarks=true,colorlinks,linkcolor=black,anchorcolor=black,citecolor=black]{hyperref}
\usepackage{CJK}
\usepackage{multirow}
\usepackage{booktabs}
\usepackage{appendix}
\usepackage{tikz}
\usepackage{graphicx, xcolor}
\usepackage{overpic,subfig, float}
\usepackage{pgfplots}
\usetikzlibrary{external}
\tikzexternalize[prefix=fig/]
\usepackage[stable]{footmisc}

\newcommand\bbR{\mathbb{R}}
\newcommand\bbS{\mathbb{S}}
\newcommand\bbN{\mathbb{N}}

\def\+#1{\boldsymbol{#1}}
\newcommand\ang[1]{\left\langle {#1} \right\rangle}

\newcommand\mH{{\mathcal H}}
\newcommand\md{{\mathrm{d}}}

\newcommand\tn[1]{\mathcal{N}{({#1})}}
\newcommand\tni[1]{\mathcal{N}_1{({#1})}}
\newcommand\tnj[1]{\mathcal{N}_2{({#1})}}

\newcommand\od[2]{\dfrac{\mathrm{d} {#1}}{\mathrm{d} {#2}}}

\newcommand\ifup[1]{}

\newcommand\onlinecite[1]{{\cite{{#1}}}}

\begin{document}
\begin{CJK*}{UTF8}{gkai}
\theoremstyle{plain}
\newtheorem{lemma}{Lemma}
\newtheorem{theorem}{Theorem}
\newtheorem{cor}{Cor}
\newtheorem{prop}{Prop}
\newtheorem{remark}{Remark}

\theoremstyle{definition}
\newtheorem{defn}{Definition}
\newtheorem{exmp}{Example}

	\newenvironment{@abssec}[1]{%
       \vspace{.05in}\footnotesize
	   \noindent
         {\upshape\bfseries #1 }\ignorespaces
	 }

\newcommand\keywordsname{Keywords}
\newcommand\AMSname{AMS subject classifications}
\newcommand\Ackname{Acknowledgments}
\newcommand\datav{Data Availibility}
\newenvironment{AMS}{\begin{@abssec}{\AMSname}}{\end{@abssec}}
\newenvironment{Ack}{\begin{@abssec}{\Ackname}}{\end{@abssec}}
\newenvironment{datava}{\begin{@abssec}{\datav}}{\end{@abssec}}

\bibliographystyle{plain}

\title{Slip and Jump Coefficients for General Gas-Surface 
	Interactions According to the Moment Method}
\author{Ruo Li\thanks{CAPT, LMAM \& School of Mathematical Sciences,
    Peking University, Beijing 100871, China, email: {\tt
      rli@math.pku.edu.cn}.} \and Yichen Yang\thanks{School of
    Mathematical Sciences, Peking University, Beijing 100871, China, email:
    {\tt yichenyang@pku.edu.cn}.}
}

\maketitle{}

\begin{abstract}
	We develop a moment method based on the Hermite series of 
	arbitrary order to calculate viscous-slip, thermal-slip,
	and temperature-jump coefficients for general gas-surface
	scattering kernels. Under some usual assumptions of scattering
	kernels, the solvability is obtained by showing the positive
	definiteness of the symmetric coefficient matrix in the
	boundary conditions. For gas flows with the Cercignani-Lampis
	gas-surface interaction and inverse-power-law intermolecular
	potentials, the model can capture the slip and jump coefficients
	accurately with elegant analytic expressions. On the one hand, 
	the proposed method can apply to the cases of arbitrary order 
	moments with increasing accuracy. On the other hand, the explicit
	formulae for low-order situations are simpler and more accurate
	than some existing results in references. Therefore, one may 
	apply these formulae in slip and jump conditions to improve 
	the accuracy of macroscopic fluid dynamic models for gas flows.
\end{abstract}

\section{Introduction}
The rarefied gas flow effects such as the velocity slip, temperature
jump, and thermally induced flows near the solid wall, are fundamental
issues in micro-electro-mechanical systems (MEMS) and low-density
hypersonic aerodynamics. \cite{MEMS2005,Cao2009,Zhang2012,Akh2023}
The rarefaction is often measured by the Knudsen
number, i.e., the ratio of the mean free path to the 
characteristic length. As the Knudsen number goes
larger, the continuum assumption breaks down, and the traditional
Navier-Stokes equations are no longer
applicable. One may impose slip and jump boundary conditions to
enlarge the application scope of the Navier-Stokes equations,
where the slip and jump coefficients play a crucial role
in the accuracy of the macroscopic fluid dynamic equations.
\cite{Tropea2007,Akh2023}

From a statistical standpoint, the Boltzmann 
equation can describe the rarefied gas flows, with a general
scattering kernel to specify the gas-surface interaction. \cite{CC1989}
The most popular scattering kernel may be the Maxwell 
diffuse-specular kernel, which has one accommodation coefficient
(AC) to parameterize the roughness of the solid wall. 
However, as shown in experiments and molecular dynamics (MD)
simulations, \cite{Yama1999,Tropea2007,Sharipov2011}
the Maxwell model with a single AC is inadequate to 
capture all the scattering behavior of the reflected molecules.
Alternatively, the classical Cercignani-Lampis (CL) scattering
kernel \cite{CLL} has two ACs individually measuring
the exchange of tangential momentum and normal energy.
The CL boundary conditions (BCs) are more realistic
and agree well with some experimental and MD data.
\cite{Su2019,And2019}

Due to the extreme importance of slip and jump 
coefficients, many theoretical and numerical efforts have
been paid to study these issues, especially for general
gas-surface interactions beyond the Maxwell model.
Slip and jump coefficients can be obtained
from half-space layer equations \cite{Bardos2006,Sone2007,Bern2008}
that depict the boundary behavior of the rarefied gas
flows. The corresponding explicit expressions in terms
of a set ACs have been given by the variational principle
\cite{CL1989,Klinc1972,Loren2011,NN2020} and the 
low-order moment-type methods.  
These moment methods usually assume the velocity distribution
function to be a Maxwellian multiplied by low-order polynomials. 
For the Maxwell diffuse-specular BCs, there are some famous
moment-type methods such as the Maxwell method, \cite{Maxwell}
the Loyalka method, \cite{Loyalka1967} and the half-range moment
method. \cite{Gross1957} Similar ideas have been applied 
to the CL BCs. \cite{Struch2013,Zhang2021}
There are direct numerical methods to calculate slip and
jump coefficients. In contrast to the Maxwell model,
\cite{Sharipov2011} fewer data are available for the CL BCs.
Siewert has developed an analytical discrete-ordinates method 
to numerically solve the layer problems based on the linearized 
Boltzmann equation (LBE). \cite{Siewert2003c} Note that the above work
is all restricted to the simplified collision models or the
hard-sphere potential. For general intermolecular potentials,
the LBE is directly solved by a synthetic 
iteration scheme and fast spectral method recently
to predict slip and jump coefficients. \cite{Su2019,Su2020,Su2022} 

The paper will focus on the Grad moment method
\cite{Grad1949} based on the arbitrary order Hermite
expansion of the velocity distribution function. The Grad-type
moment model with 13 moments, \cite{Tah2009,Tah2010} 26 moments,
\cite{Gu2010,Gu2014} and an arbitrary number of 
moments \cite{2008Linear,Lijun2017,Yang2022a}
have been developed to study the slip and jump. 
It's promising that the accuracy of the moment model can 
improve when we enlarge the number of used moments. However,
the related theory is mainly on the Maxwell BCs, and the
numerical results are mostly reported for simplified 
collision models. Here, we
develop the arbitrary order moment method to model the
layer problems with general BCs. In particular, we achieve
the numerical solutions and explicit formulae of slip and
jump coefficients for the CL BCs and inverse-power-law
(IPL) intermolecular potentials.

There are three distinct difficulties when we extend the 
moment methods to general cases. Firstly, although the 
general mathematical formulation of the layer problems based on 
the LBE is well-known, \cite{Williams2001,Siewert2003c} it's hard
to find a clear and rigorous definition of the layer problems for 
the moment method with arbitrary order and IPL potentials. Secondly, 
the solvability of the layer equations is not for granted.
The general well-posed theory of the boundary value problem
for the moment equations are studied in 
Ref.\onlinecite{Yang2022a,Yang2022b}. 
Similar results about the discrete Boltzmann equation are given
in Ref.\onlinecite{Bern2008,Bern2010b,Bern2010}. We need to verify the 
solvability conditions for the moment equations with general BCs.
Thirdly, the numerical treatment of the layer equations with 
general BCs is not trivial. The calculation of the IPL potentials
has been recently considered in Ref.\onlinecite{Wang2019}.
The Hermite expansion of the CL scattering kernel is 
mainly a two-dimensional half-space integral, which is
a combination of the modified Bessel function, Hermite 
polynomials, exponentials, and powers. The reckless numerical
integration may bring considerable errors and prevent us 
from finding explicit expressions of slip and jump coefficients
in terms of the ACs.

The paper is devoted to all of the above issues.
We define the layer problems based on the 
moment method with arbitrary order according to 
the spirit of the Chapman-Enskog
expansion. The model can depict all classical half-space
problems with IPL potentials. 
To ensure the solvability, we make a careful choice of
the test functions and the simple boundary stabilization with
a rank-one modification. Briefly speaking,
the rank-one modification removes the eigenvalue one
arising from the normalization property of the scattering
kernel, which leads to the positive definiteness of the
symmetric coefficient matrix. To obtain explicit formulae,
we give closed-form 
integral representations involving the CL scattering kernel
and Hermite polynomials by a recursion relation. 
These formulae explain the influence of the two ACs
in the CL kernel and the different intermolecular potentials.

The rest of this paper is organized as follows. In Section
\ref{sec:2}, we briefly review the layer equations and
derive the general BCs for them.
In Section \ref{sec:3}, we verify the solvability
of the layer equations. In Section \ref{subsec:32},
we focus on the calculation of the CL scattering kernel.
In Section \ref{sec:4}, we consider the specific 
half-space problems. We give highly accurate numerical
results and explicit formulae about slip and jump 
coefficients. The paper ends with conclusions.


	\section{The Moment Model}
	\label{sec:2}
	\subsection{Review on the layer equations}
For single-species monatomic gases, the half-space equations 
for the moment model read as \cite{Lijun2017,Yang2022a} 
\begin{eqnarray}
	\label{eq:KL}
	\+A_2\od{\+w}{y} &=& -\+Q\+w,\ \+w=\+w(y),\ y\in[0,+\infty),\\
	\+w(\infty) &=& \+0, \notag
\end{eqnarray}
where $\+w=\+w(y)\in\bbR^N$ is the moment variable in
the Knudsen layer and $\+A_2$ as well as $\+Q$ are constant matrices.
The argument $y$ is the stretched coordinate normal to the
boundary. This class of half-space equations can depict the
boundary behavior of the rarefied gas flows. 
\cite{Bardos2006,Sone2007,Bern2008}

The moment variable is from the Hermite series of
the velocity distribution function.
Let $\omega = \omega(\+\xi)$ be the global Maxwellian
\begin{equation}
	\label{eq:omega1}
	\omega(\+\xi) = (2\pi)^{-3/2}\exp\left(-|\+\xi|^2/2\right),
\end{equation}
where $\+\xi=(\xi_1,\xi_2,\xi_3)\in\bbR^3$ is 
the microscopic velocity of particles.
The corresponding Hermite polynomials 
$\phi_{\+\alpha}=\phi_{\+\alpha}(\+\xi)$ are defined \cite{Grad1949N}
by the recursion relation
\begin{equation}
	\label{eq:defH}
	\xi_d\phi_{\+\alpha} = \sqrt{\alpha_d}\phi_{\+\alpha-\+e_d}
	+\sqrt{\alpha_d+1}\phi_{\+\alpha+\+e_d},\
	\+\alpha=(\alpha_1,\alpha_2,\alpha_3)\in\bbN^3,
\end{equation}
where $\phi_{\+0}=1$ and $\phi_{\+e_i}=\xi_i$, with
$\+e_i\in\bbN^3$ only the $i$-th element being one. 
As a convention, we regard $\phi_{\+\alpha}$ as zero
if any component of $\+\alpha$ is negative.
Then we have the orthogonality
\begin{equation}
	\ang{\omega\phi_{\+\alpha}\phi_{\+\beta}} = \delta_{\+\alpha,
	\+\beta},\quad  
	\ang{\cdot}\triangleq\int_{\bbR^3}\!\!\cdot\,\md\+\xi.
\end{equation}
For any given integer $M\geq 3$, we let
\[
	\mathbb{I}_M=\{\+\alpha\in\bbN^3,\ |\+\alpha|=\alpha_1+
\alpha_2+\alpha_3\leq M\},\quad N=\#\mathbb{I}_M.
\]
In virtue of the ordering of multi-indices
(represented by the square brackets), we call the moment 
variable $\+w$ ``induced'' from $\mathbb{I}_M$ with
\[
	\+w[\+\alpha] \triangleq w_{\+\alpha} = w_{\+\alpha}(y).
\]
The above notations are rigorously specified in Appendix \ref{app:A}.
Then $M$ is called the moment order and the perturbation of 
the velocity distribution function in the Knudsen layer can
be approximated by the Hermite series
\begin{equation}
	\label{eq:ans1}
	f = f(y,\+\xi) = \omega(\+\xi)\sum_{\+\alpha\in\mathbb{I}_M}
	w_{\+\alpha}(y)
	\phi_{\+\alpha}(\+\xi).
\end{equation}
The formula \eqref{eq:ans1} relates the moment variable with
physical quantities in the Knudsen layer. For example,
the density $\rho$, the temperature $\theta$, the macro
velocity $\+u=(u_1,u_2,u_3)$, the stress tensor $\sigma_{ij}$,
and the pressure $p$ are defined as
\begin{equation}
	\label{eq:macro_v}
	\rho=\ang{f}, u_i=\ang{\xi_i f}, \theta=
	\ang{\left(\frac{|\+\xi|^2}{3}-1\right)f}, p=\rho+\theta,
	\sigma_{ij}+p\delta_{ij}=\ang{\xi_i\xi_j f}.
\end{equation}

The matrices $\+A_2\in\bbR^{N\times N}$ and 
$\+Q\in\bbR^{N\times N}$ are induced 
(see Appendix \ref{app:A}) from
$\mathbb{I}_M\times\mathbb{I}_M$ with
\begin{eqnarray}
	\label{eq:A}
	\+A_2[{\+\alpha},{\+\beta}]&=&
	\ang{\xi_2\omega\phi_{\+\alpha}\phi_{\+\beta}} = 
\sqrt{\alpha_2}\delta_{\+\beta,\+\alpha-\+e_2}+
\sqrt{\alpha_2+1}\delta_{\+\beta,\+\alpha+\+e_2},\\
	\label{eq:Q}
	\+Q[{\+\alpha},{\+\beta}]&=&	
	\ang{\omega\mathcal{L}(\phi_{\+\beta})\phi_{\+\alpha}}.	
\end{eqnarray}
Here the operator $\mathcal{L}$ is the linearized Boltzmann
operator defined by
\[
	\mathcal{L}(\phi) = -\frac{1}{B_0}
		\omega^{-1}(\mathcal{Q}(\omega,\omega\phi)+
		\mathcal{Q}(\omega\phi,\omega)),
\]
where $B_0$ is a constant representing the average
collision frequency, and $\mathcal{Q}$ is the Boltzmann
collision operator defined as 
\begin{equation}
	\notag
	\mathcal{Q}(g,h) = \frac{1}{2}
	\int_{\bbR^3}\!\!\int_{\bbS^{2}}\!
   	\!(g'h_*'+g_*'h'-gh_*-g_*h)
	B(|\+\xi-\+\xi_*|,\+\Theta)\,\mathrm{d}\+\Theta
	\mathrm{d}\+\xi_*.
\end{equation}
Note that we write $g_*=g(t,\+x,\+\xi_*),\ g'=g(t,\+x,\+\xi'),\ g_*'=
g(t,\+x,\+\xi_*'),$ etc., for short. The pre-collisional
velocities $\+\xi'$ and $\+\xi_*'$ are
\[
\+\xi' = \frac{\+\xi+\+\xi_*}{2}+\frac{|\+\xi-\+\xi_*|}{2}\+\Theta,
\quad 	
\+\xi_*' = \frac{\+\xi+\+\xi_*}{2}-\frac{|\+\xi-\+\xi_*|}{2}\+\Theta,
\quad \+\Theta\in\bbS^2.
\]
Let $\+g=\+\xi-\+\xi_*$. The 
nonnegative function $B(|\+g|,\+\Theta)$ is called the collision kernel.

In this paper, we consider the inverse-power-law (IPL) model.
Now the collision kernel has the form \cite{Wang2019}
\[
B(|\+g|,\+\Theta) = |\+g|^{\frac{\eta-5}{\eta-1}}W_0
\left|\od{W_0}{\varphi}\right|,
\]
where $\varphi$ is the angle satisfying
$\cos\varphi=\+g\cdot\+\Theta/|\+g|$ and $\eta>3$ is the index
in the inverse-power potential. $3<\eta<5$ is called the
soft potential and $\eta>5$ is called the hard potential.
When $\eta=5$, it is the case
of Maxwell molecules. When $\eta=+\infty$, the model
can be regarded as the hard-sphere (HS) model. The dimensionless
impact parameter $W_0$ is given by
\[
\varphi = \pi-2\int_0^{W_1}\left(1-W^2-\frac{2}{\eta-1}\left(
			\frac{W}{W_0}\right)^{\eta-1}\right)^{-1/2}\,\mathrm{d}W,
\]
with $W_1$ a positive real number satisfying
\[
1-W_1^2-\frac{2}{\eta-1}\left(\frac{W_1}{W_0}\right)^{\eta-1}=0.	
\]
For the IPL intermolecular potentials
defined above,
it is classical \cite{CC1989} that
$\+Q\geq 0$.

	\subsection{General boundary conditions}
Assume the boundary is fixed at the plane $\{\+x\in\bbR^3,\ x_2=0\}.$
In the kinetic theory, the general scattering BCs for the 
velocity distribution function $F(\+\xi)$ read as
\begin{equation}
	\label{eq:gBC}
	\xi_2 F(\+\xi) = \int_{\xi_2'<0}
	\!\!|\xi_2'|R(\+\xi'\rightarrow\+\xi)F(\+\xi')\,\mathrm{d}
	\+\xi',\ \xi_2>0.
\end{equation}
The scattering kernel $R(\+\xi'\rightarrow\+\xi)$
describes the probability that an incident molecule with the
velocity $\+\xi'$ is scattered to have
the reflected velocity $\+\xi$ lying in the range $\mathrm{d}\+\xi$.
In principle, the scattering kernel is defined only for $\+\xi$ and
$\+\xi'$ with $\xi_2>0$ as well as $\xi_2'<0$. The kernel
$R(\+\xi'\rightarrow\+\xi)$
must satisfy some general conditions \cite{CLL} such as
\begin{itemize}
\item Nonnegativity.
$R(\+\xi'\rightarrow\+\xi)\geq 0,$ for all $\+\xi,\+\xi'.$
\item Normalization.
	\begin{equation}
		\label{eq:cl_nor}
		\displaystyle\int_{\xi_2>0}\!\!R(\+\xi'\rightarrow\+\xi)
\,\mathrm{d}\+\xi=1,\ \forall \+\xi'.
	\end{equation}
\item Detailed balance.
	\begin{equation}
		\label{eq:cl_deb}
		|\xi_2'|\omega(\+\xi')R(\+\xi'\rightarrow\+\xi)=
\xi_2\omega(\+\xi)R(-\+\xi\rightarrow -\+\xi').
	\end{equation}
\end{itemize}

The linearized Maxwell diffuse-specular scattering
kernel \cite{Maxwell} reads as
\begin{equation}
R(\+\xi'\rightarrow\+\xi) = 
R_M(\+\xi'\rightarrow\+\xi) =
\chi\frac{\xi_2}{2\pi}
\exp\left(-\frac{|\+\xi|^2}{2}\right) +
(1-\chi)\delta(\+\xi'-\+\xi^*),
\end{equation}
where $\chi\in[0,1]$ is the tangetial momentum accommodation
coefficient (TMAC) and $\+\xi^*=(\xi_1,-\xi_2,\xi_3)$.
The linearized Cercignani-Lampis (CL) scattering kernel
\cite{CLL} reads as 
\begin{eqnarray} &&
R(\+\xi'\rightarrow\+\xi) = 
R_{CL}(\+\xi'\rightarrow\+\xi) = \\
&=& \notag
	\frac{\xi_2}{2\pi\alpha_n\alpha_t(2-\alpha_t)}
	I_0\left(\frac{\sqrt{1-\alpha_n}}{\alpha_n}\xi_2\xi_2'\right)
	\exp\left(-\frac{\xi_2^2+(1-\alpha_n)\xi_2'^2}{2\alpha_n}
			-\frac{|\+\xi_t-(1-\alpha_t)\+\xi_t'|^2}
			{2\alpha_t(2-\alpha_t)}\right),
\end{eqnarray}
where $0<\alpha_n\leq 1$ and $0<\alpha_t<2$ are 
two accommodation coefficients (ACs).
Here $\+\xi_t=(\xi_1,\xi_3)$ and 
the modified Bessel function 
\[I_0(x)=\displaystyle\frac{1}{2\pi}
\int_0^{2\pi}\!\!\exp(x\cos(\varphi))\,\mathrm{d}\varphi.\]
When $\alpha_n=\alpha_t=1$, the CL BCs
turn to the fully diffuse case, samely as
$\chi=1$ in the Maxwell BCs.
As $\alpha_n\rightarrow 0$
and $\alpha_t\rightarrow 0$, the specular BCs are recovered.
When $\alpha_n\rightarrow 0$ and $\alpha_t\rightarrow 2$,
we have the bounce-back condition. The CL BCs are far from
general \cite{Cow1974} but simple and accurate enough to 
be used in practice.

We follow Grad's framework \cite{Grad1949} to construct the
BCs for moment equations \eqref{eq:KL} by testing the 
kinetic BCs \eqref{eq:gBC}
with even polynomials (about $\xi_2$). 
Choosing the Hermite polynomials $\phi_{\+\alpha}$ with even $\alpha_2$
as test functions gives
\begin{eqnarray}
	\label{eq:gBC2}
	\int_{\xi_2>0}\!\!\xi_2 F(\+\xi)\phi_{\+\alpha}(\+\xi)
	\,\mathrm{d}\+\xi = \int_{\xi_2>0}\!\!\int_{\xi_2'<0}
	\!\!|\xi_2'|R(\+\xi'\rightarrow\+\xi)F(\+\xi')\,\mathrm{d}
	\+\xi'\phi_{\+\alpha}(\+\xi)\,\mathrm{d}\+\xi.
\end{eqnarray}
We write the velocity distribution function as
\[
	F(\+\xi) = f(0,\+\xi) + \bar{F}(\+\xi),
\]
where $f$ is the perturbation defined in \eqref{eq:ans1} and
$\bar{F}(\+\xi)$ is regarded as the given velocity
distribution function of the bulk flow. 
Analogously as \eqref{eq:ans1}, we assume
\[
	\bar{F}(\+\xi) = \omega(\+\xi)\sum_{\+\alpha\in\mathbb{I}_M}
	\bar{w}_{\+\alpha}\phi_{\+\alpha}(\+\xi).
\]
Plugging \eqref{eq:ans1} and the above formula into \eqref{eq:gBC2},
denoting
\[ K(\+\xi'\rightarrow\+\xi) = \frac{1}{\xi_2}R(\+\xi'\rightarrow
\+\xi)\omega(\+\xi'),\ \xi_2>0,\xi_2'<0,\]
we have the BCs for the moment variable $\+w$:
\begin{eqnarray}\notag
	&&\sum_{|\+\beta|\leq M} (w_{\+\beta}(0)+\bar{w}_{\+\beta})
	\int_{\xi_2>0}\!\!\xi_2 \omega(\+\xi)
	\phi_{\+\beta}(\+\xi)\phi_{\+\alpha}(\+\xi)
	\,\mathrm{d}\+\xi \\ \label{eq:gBC3}
	&=& \sum_{|\+\beta|\leq M} (w_{\+\beta}(0)+\bar{w}_{\+\beta})
	\int_{\xi_2>0}\!\!\int_{\xi_2'<0}
	\!\!\xi_2|\xi_2'| 
	K(\+\xi'\rightarrow\+\xi)\phi_{\+\beta}(\+\xi')
	\phi_{\+\alpha}(\+\xi)\,\mathrm{d}\+\xi'\mathrm{d}\+\xi,
\end{eqnarray}
where $\alpha_2$ is even and $\bar{w}_{\+\alpha}$ are regarded
as given by the bulk flow.

\begin{remark}
	The detailed balance \eqref{eq:cl_deb}
	of $R(\+\xi'\rightarrow\+\xi)$ ensures the reciprocity
	of $K(\+\xi'\rightarrow\+\xi)$, i.e.,
	\[K(\+\xi'\rightarrow\+\xi)=K(-\+\xi\rightarrow -\+\xi').\]
\end{remark}

For simplicity, we write the half-space integrals as 
\begin{eqnarray}\label{eq:defS}
	S(\+\alpha,\+\beta) = \int_{\xi_2>0}\!\!\xi_2 \omega(\+\xi)
	\phi_{\+\beta}(\+\xi)\phi_{\+\alpha}(\+\xi)
	\,\mathrm{d}\+\xi.
\end{eqnarray}
In order to analyze, we make an even continuation
such that 
$K(\+\xi'\rightarrow\+\xi)=K(-\+\xi'\rightarrow\+\xi)$
for $\+\xi'\in\bbR^3.$ Then we define
\begin{eqnarray}
	\label{eq:defR}
	R(\+\alpha,\+\beta) &=& \int_{\xi_2>0}\!\!\xi_2
	\left(\int_{\bbR^3}\!\! K(\+\xi'\rightarrow\+\xi)
	\phi_{\+\beta}(\+\xi')\,\mathrm{d}\+\xi'\right)
	\phi_{\+\alpha}(\+\xi)\,\mathrm{d}\+\xi. 
\end{eqnarray}
Since the whole space integral about odd functions are zero,
we have $R(\+\alpha,\+\beta)=0$ when $\beta_2$ is odd.
If all the mentioned integrals exist and the Fubini theorem holds,
we have
	\begin{eqnarray*}
		&&\int_{\xi_2>0}\!\!\int_{\xi_2'<0}
	\!\!\xi_2|\xi_2'| 
	K(\+\xi'\rightarrow\+\xi)\phi_{\+\beta}(\+\xi')
		\phi_{\+\alpha}(\+\xi)\,\mathrm{d}\+\xi'\mathrm{d}\+\xi \\
		&=& \int_{\xi_2'<0}\!\!|\xi_2'|\left(\int_{\xi_2>0}
	\!\!\xi_2
	K(\+\xi'\rightarrow\+\xi)
		\phi_{\+\alpha}(\+\xi)\,\mathrm{d}\+\xi\right)
		\,\mathrm{d}\+\xi'\phi_{\+\beta}(\+\xi') \\
		&=& \int_{\xi_2'<0}\!\!|\xi_2'|\sum_{\+\gamma\in\bbN^3}
		\left(\int_{\bbR^3}\!\!\left(\int_{\xi_2>0}
	\!\!\xi_2 K(\+\xi'\rightarrow\+\xi)
		\phi_{\+\alpha}(\+\xi)\,\mathrm{d}\+\xi\right)
		\phi_{\+\gamma}(\+\xi')\,\md\+\xi'\right)
		\phi_{\+\beta}(\+\xi')\phi_{\+\gamma}(\+\xi')\omega(\+\xi') 
		\,\mathrm{d}\+\xi' \\
		&=& \sum_{\+\gamma\in\bbN^3}
		R(\+\alpha,\+\gamma)
		\int_{\xi_2'<0}\!\!|\xi_2'|
		\phi_{\+\beta}(\+\xi')\phi_{\+\gamma}(\+\xi')\omega(\+\xi') 
		\,\mathrm{d}\+\xi' 	\\
		&=& \sum_{\+\gamma\in\bbN^3}R(\+\alpha,\+\gamma)
		S(\+\gamma,\+\beta)(-1)^{\beta_2+\gamma_2}.
	\end{eqnarray*}
Meanwhile, the normalization property \eqref{eq:cl_nor} gives
$R(\+0,\+\beta)=\delta_{\+\beta,\+0}.$

Now we give the detailed choice of the test functions.
We split the multi-indices into even and odd subsets, i.e.,
\[
	\mathbb{I}_{M,e}=\{\+\alpha\in\bbN^3:\ |\+\alpha|\leq M,\
			\alpha_2 \text{\ even}\},\quad
	\mathbb{I}_{M,o}=\{\+\alpha\in\bbN^3:\ |\+\alpha|\leq M,\
	\alpha_2 \text{\ odd}\}.
\]
Let $m=\#\mathbb{I}_{M,e}$ and $n=\#\mathbb{I}_{M,o}$.
We immediately have $m+n=N$ with $m\geq n.$ So as in 
Grad's framework, \cite{Grad1949} a natural choice of
the test functions is $\phi_{\+\alpha}$ with
$\+\alpha\in\mathbb{I}_{M-1,e}$. The choice exactly 
gives $n=\#\mathbb{I}_{M-1,e}$ BCs. Let 
$\+w=[\+w_e^T,\+w_o^T]^T$ and 
$\bar{\+w}=[\bar{\+w}_e^T,\bar{\+w}_o^T]^T.$ Here  
$\+w_e\in\bbR^m$ and $\bar{\+w}_e\in\bbR^m$ are induced (see
Appendix \ref{app:A}) from $\mathbb{I}_{M,e}$ and $\+w_o\in\bbR^n,
\bar{\+w}_o\in\bbR^n$ are induced from $\mathbb{I}_{M,o}$.
	For any positive integers $I$ and $J$, assume that
	the matrix $\+M_{I,J}$ is induced from $\mathbb{I}_{I,e}
	\times\mathbb{I}_{J,o}$, and $\+S_{I,J},\ \+R_{I,J}$ are
	induced from $\mathbb{I}_{I,e}\times\mathbb{I}_{J,e}$, with
	the entries 
\begin{eqnarray}
	\label{eq:defMM}
	\+M_{I,J}[{\+\alpha},{\+\beta}] =
	2S(\+\alpha,\+\beta),\quad  
	\+S_{I,J}[{\+\alpha},{\+\beta}] =
	S(\+\alpha,\+\beta),\quad
	\+R_{I,J}[{\+\alpha},{\+\beta}] =
	R(\+\alpha,\+\beta).
\end{eqnarray}
Then from \eqref{eq:gBC3}, the BCs of the moment 
equations \eqref{eq:KL} can 
write in the even-odd parity form
\begin{eqnarray}\notag &&
	(\+S_{M-1,M}-\+R_{M-1,J}\+S_{J,M})(\+w_e(0)+\bar{\+w}_e)
	\\ &+& \frac{1}{2}(\+M_{M-1,M}+\+R_{M-1,J}\+M_{J,M})
	(\+w_o(0)+\bar{\+w}_o)=\+0, \label{eq:gBC4}
\end{eqnarray}
where $\+M_{M-1,M}\in\bbR^{n\times n}$,
$\+S_{M-1,M}\in\bbR^{n\times m}$,
and $J$ is any positive integer.

\begin{remark}
	In general, the BCs in \eqref{eq:gBC4} is equivalent to the
	corresponding ones in \eqref{eq:gBC3} when $J=\infty$.
	But for the CL scattering kernel, we will show below
	(in Theorem \ref{thm:scll}) that  
	\[R(\+\alpha,\+\beta)=0,\ \text{\ when}\ \beta_2>\alpha_2.\]
	So $J = M-1$ is enough in the CL case.
\end{remark}

	\section{Solvability of the Layer Equations}
	\label{sec:3}
	To ensure the unique solvability of the layer equations \eqref{eq:KL}:
\[
\+A_2\od{\+w}{y} = -\+Q\+w,\ \+w(\infty)=\+0,\ y\in[0,+\infty),
\]
the proper number of BCs at $y=0$ relies on not only
the eigenvalues of the boundary matrix, i.e., $-\+A_2$ 
here, but also the null space structure of $\+Q.$ 
As shown in Ref.\onlinecite{Grad1949,cai2011}, the $N\times N$ matrix
$\+A_2$ has $n$ positive eigenvalues,
$n$ negative eigenvalues and $m-n$ zero eigenvalues.
It's classical \cite{CC1989} that the linearized
Boltzmann operator has a five-dimensional null space.
Accordingly, we have \cite{Yang2022a}
$\mathrm{Null}(\+Q)=\mathrm{span}\{\+\varphi_0,
\+\varphi_1,\+\varphi_2,\+\varphi_3,\+\varphi_4\}$, where
$\+\varphi_i$ are vectors induced from $\mathbb{I}_M$ with
the non-zero entries
\begin{eqnarray*}
	\+\varphi_0[{\+0}] = 1,\quad
	\+\varphi_i[{\+e_i}] = 1,\ 1\leq i\leq 3,\quad
	\+\varphi_{4}[{2\+e_d}] = \sqrt{3}/3,\ 1\leq d\leq 3.
\end{eqnarray*}
From \eqref{eq:macro_v}, 
we have $\+\varphi_0^T\+w=\rho,\ \+\varphi_i^T\+w=u_i,
1\leq i\leq 3,$ and $\+\varphi_4^T\+w
=\displaystyle\frac{\sqrt{6}}{2}\theta$. 
Then Theorem 1 in Ref.\onlinecite{Yang2022a} shows that
the layer equations 
\eqref{eq:KL} need $n-4$ BCs at $y=0$ to determine a
unique $\+w(y)$. 

The above result is different from the well-posed theory of the
initial boundary value
problem (IBVP), where $n$ BCs are needed. \cite{Hil2013}
To agree with the IBVP, we also impose $n$ BCs for the layer
equations \eqref{eq:KL}. Therefore, the given values $\bar{\+w}_e$ and
$\bar{\+w}_o$ in the BCs \eqref{eq:gBC4} should not be arbitrary
but satisfy $4$ additional conditions. Theorem 2 in 
Ref.\onlinecite{Yang2022a}
shows that a special structure of the BCs would ensure the solvability,
i.e.,

\begin{theorem}
	\label{thm:31}
	If we equip the layer equations \eqref{eq:KL} with
	the following $n$ BCs at $y=0$:
	\begin{eqnarray}\label{eq:stru}
	\+M_{M,M}^T(\+w_e(0)+\bar{\+w}_e) + \+H(\+w_o(0)+\bar{\+w}_o) = \+0,	
	\end{eqnarray}
	where $\+M_{M,M}\in\bbR^{m\times n}$ is defined in \eqref{eq:defMM}
	and $\+H\in\bbR^{n\times n}$ is any symmetric positive definite
	matrix, then there
	exists a unique solution of $\+w(y)$ and $\+\varphi_i^T\bar{\+w},
	i=0,1,3,4,$ when other components of
	$\bar{\+w}$ are arbitrarily given.
	In particular, the solution would give $\+\varphi_2^T\+w(y)
	=u_2(y)=0.$
\end{theorem}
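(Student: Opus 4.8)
The plan is to run a dissipation (energy) argument built on the symmetry of the two coefficient matrices, and then to close the problem as a square linear system. First I would record the structural facts. The matrix $\+A_2$ is symmetric, and in the even--odd splitting $\+w=[\+w_e^T,\+w_o^T]^T$ it is purely off-diagonal with upper block $\+M_{M,M}$: by parity the full-space integral in \eqref{eq:A} equals twice the half-space one in \eqref{eq:defS}, so $\+A_2[\+\alpha,\+\beta]=2S(\+\alpha,\+\beta)=\+M_{M,M}[\+\alpha,\+\beta]$ whenever $\alpha_2$ is even and $\beta_2$ is odd, whence $\+w^T\+A_2\+w=2\+w_e^T\+M_{M,M}\+w_o$. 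Combined with $\+Q=\+Q^T\geq0$, this gives the dissipation identity $\frac{\md}{\md y}\big(\+w^T\+A_2\+w\big)=2\+w^T\+A_2\+w'=-2\+w^T\+Q\+w\leq0$; integrating on $[0,+\infty)$ with $\+w(\infty)=\+0$ yields $\+w(0)^T\+A_2\+w(0)=2\int_0^\infty\+w^T\+Q\+w\,\md y\geq0$.

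Next I would extract the conservation laws carried by the null space of $\+Q$. For any $\+\psi\in\mathrm{Null}(\+Q)$ symmetry gives $\+\psi^T\+Q=\+0$, so $\+\psi^T\+A_2\+w(y)$ is constant and, by decay, vanishes identically; equivalently $(\+A_2\+\psi)^T\+w(y)\equiv0$. Taking $\+\psi=\+\varphi_0$ and checking $\+A_2\+\varphi_0=\+\varphi_2$ directly from \eqref{eq:A} gives $u_2(y)=\+\varphi_2^T\+w(y)\equiv0$, the final assertion. A separate ingredient I would establish is that $\+A_2$ is injective on $\mathrm{Null}(\+Q)$: computing $\+A_2\+\varphi_i$ from \eqref{eq:A} shows that $\+A_2\+\varphi_2$ is the unique even image while $\+A_2\+\varphi_0,\+A_2\+\varphi_1,\+A_2\+\varphi_3,\+A_2\+\varphi_4$ are odd vectors supported on distinct multi-indices (the bound $M\geq3$ guarantees that all needed indices lie in $\mathbb{I}_M$), so the five images are linearly independent and $\mathrm{Null}(\+A_2)\cap\mathrm{Null}(\+Q)=\{\+0\}$.

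For uniqueness I would take the difference of two solutions that share the freely prescribed data; the difference solves \eqref{eq:KL} with a boundary offset $\+d\in\mathrm{span}\{\+\varphi_0,\+\varphi_1,\+\varphi_3,\+\varphi_4\}$, which is even, so $\+d_o=\+0$. Substituting the boundary condition $\+M_{M,M}^T\+w_e(0)=-\+H\+w_o(0)-\+M_{M,M}^T\+d$ into $\+w(0)^T\+A_2\+w(0)=2\+w_e(0)^T\+M_{M,M}\+w_o(0)$ produces a cross term $\+d^T\+M_{M,M}\+w_o(0)=(\+A_2\+d)^T\+w(0)$; since $\+A_2\+d\in\mathrm{span}\{\+A_2\+\varphi_i\}$ and each $(\+A_2\+\varphi_i)^T\+w(0)=0$ by the conservation laws, this term vanishes, leaving $\+w(0)^T\+A_2\+w(0)=-2\+w_o(0)^T\+H\+w_o(0)\leq0$. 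The dissipation identity forces the same quantity to be $\geq0$, so positive definiteness of $\+H$ gives $\+w_o(0)=\+0$ and $\int_0^\infty\+w^T\+Q\+w\,\md y=0$, i.e. $\+w(y)\in\mathrm{Null}(\+Q)$ for all $y$. Feeding this back into \eqref{eq:KL} gives $\+A_2\+w'=\+0$ with $\+w'\in\mathrm{Null}(\+Q)$, and the injectivity above forces $\+w'\equiv\+0$, hence $\+w\equiv\+0$. The boundary condition then collapses to $\+M_{M,M}^T\+d=\+0$; since $\+d$ is even, this means the odd vector $\+A_2\+d$ vanishes, so $\+d\in\mathrm{Null}(\+A_2)\cap\mathrm{Null}(\+Q)=\{\+0\}$ and $\+d=\+0$.

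Finally, existence follows by a dimension count: the decaying solutions of \eqref{eq:KL} form an $(n-4)$-dimensional family parameterized by $\+w(0)$ (Theorem~1 of Ref.~\onlinecite{Yang2022a}), and adjoining the four design parameters $\+\varphi_i^T\bar{\+w}$, $i=0,1,3,4$, turns \eqref{eq:stru} into a linear map from an $n$-dimensional space into $\bbR^n$; the uniqueness argument shows this map is injective, hence bijective, which yields existence and uniqueness of $\+w(y)$ together with the four determined components of $\bar{\+w}$. I expect the main obstacle to be the cross-term cancellation in the energy estimate: it is precisely the orthogonality $(\+A_2\+\varphi_i)^T\+w(0)=0$ from the conservation laws that both singles out which four data are not free and lets the $\+H$-term fix the sign, so getting this bookkeeping right --- together with the explicit linear-independence check for $\{\+A_2\+\varphi_i\}$ and the justification that the boundary term at $y=\infty$ vanishes --- is where the real work lies.
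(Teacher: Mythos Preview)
The paper does not actually prove Theorem~\ref{thm:31}; it is quoted verbatim from Theorem~2 of Ref.~\onlinecite{Yang2022a} and used as a black box. So there is no in-paper argument to compare yours against.

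That said, your proposal is a correct and self-contained proof. The energy (dissipation) argument is the natural one for this symmetric-hyperbolic-with-relaxation structure, and every step checks out: the even--odd block form $\+A_2=\begin{pmatrix}0&\+M_{M,M}\\\+M_{M,M}^T&0\end{pmatrix}$ follows exactly as you say from parity of $\xi_2\phi_{\+\alpha}\phi_{\+\beta}\omega$; the conservation laws $(\+A_2\+\varphi_i)^T\+w\equiv0$ are immediate from $\+Q=\+Q^T$; the explicit computation of $\+A_2\+\varphi_i$ (with $M\geq3$ ensuring $3\+e_2\in\mathbb{I}_M$) gives both the injectivity on $\mathrm{Null}(\+Q)$ and the identification of $\+A_2\+\varphi_0=\+\varphi_2$; and the cross-term cancellation in the uniqueness estimate is precisely the orthogonality you isolate. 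The only external input you need is the $(n-4)$-dimensionality of the decaying-solution manifold, which the paper itself imports from the same reference (Theorem~1 there), so your dimension count for existence is on the same footing as the paper's. Your closing caveat about the boundary term at $y=\infty$ is harmless here: decaying solutions of this constant-coefficient linear system decay exponentially, so both the flux and the dissipation integral vanish at infinity.
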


Unfortunately, \eqref{eq:gBC4} is not in the form of
\eqref{eq:stru}. For the Maxwell diffuse-specular case,
\eqref{eq:gBC4} is shown unstable when $m>n$ for the 
non-homogeneous layer equations. \cite{Yang2022b}
The reason lies in that the linear space determined
by the BCs does not contain the null space of the boundary matrix.
\cite{Osher1975,Sarna2018} In Ref.\onlinecite{Sarna2018,Yang2022a},
the Maxwell BCs are modified to agree with the structure in
\eqref{eq:stru}. A basic tool in Ref.\onlinecite{Yang2022a} is
the following lemma:

\begin{lemma}
	\label{lem:01}
	For any positive integer $I$ and $J$, the matrix $\+S_{I,I}$ is 
	symmetric positive definite.
	The matrix $\+M_{I,J}$ is lower triangular. When $I\geq J$, 
	the matrix $\+M_{I,J}$ is of full column rank.
\end{lemma}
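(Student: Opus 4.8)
The plan is to reduce every entry $S(\+\alpha,\+\beta)$ to a one-dimensional integral in the normal variable $\xi_2$ and then read all three assertions off that reduced form. Since the Gaussian $\omega$ and the Hermite polynomials factorize across the three velocity components, $\phi_{\+\alpha}(\+\xi)=\prod_d\phi_{\alpha_d}(\xi_d)$ with $\phi_{\alpha_d}$ the one-dimensional Hermite polynomials, the integrations over $\xi_1$ and $\xi_3$ run over the whole line and produce Kronecker deltas by orthogonality, leaving
\[
S(\+\alpha,\+\beta)=\delta_{\alpha_1,\beta_1}\delta_{\alpha_3,\beta_3}\,s(\alpha_2,\beta_2),\qquad s(a,b)=\int_0^{+\infty}\!\!\xi_2\,\omega(\xi_2)\phi_a(\xi_2)\phi_b(\xi_2)\,\md\xi_2 .
\]
Thus both $\+S_{I,I}$ and $\+M_{I,J}$ are block-structured by the frozen pair $(\alpha_1,\alpha_3)$, and everything hinges on the scalar quantities $s(a,b)$.

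For the symmetric positive definiteness of $\+S_{I,I}$, symmetry is immediate from $\phi_{\+\alpha}\phi_{\+\beta}=\phi_{\+\beta}\phi_{\+\alpha}$. For definiteness I would test against an arbitrary coefficient vector $\+c$ indexed by $\mathbb{I}_{I,e}$ and write $\+c^{T}\+S_{I,I}\+c=\int_{\xi_2>0}\xi_2\,\omega\,p^2\,\md\+\xi$ with $p=\sum_{\+\alpha\in\mathbb{I}_{I,e}}c_{\+\alpha}\phi_{\+\alpha}$. The integrand is nonnegative on the open half-space, so the form vanishes only if the polynomial $p$ vanishes on $\{\xi_2>0\}$; a polynomial vanishing on an open set is identically zero, and the linear independence of the Hermite basis then forces $\+c=\+0$. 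This gives $\+S_{I,I}\succ0$.

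For $\+M_{I,J}$ the decisive point is a parity observation: its rows carry even $\alpha_2$ and its columns odd $\beta_2$, so $a+b$ is odd and $\xi_2\,\omega\,\phi_a\phi_b$ is an even function of $\xi_2$. Hence $s(a,b)=\tfrac12\int_{\bbR}\xi_2\,\omega\,\phi_a\phi_b\,\md\xi_2$, and expanding $\xi_2\phi_a$ by the recursion \eqref{eq:defH} together with the full-line orthogonality gives $2s(a,b)=\sqrt{a}\,\delta_{a-1,b}+\sqrt{a+1}\,\delta_{a+1,b}$. Consequently $\+M_{I,J}[\+\alpha,\+\beta]=\sqrt{\alpha_2}\,\delta_{\+\beta,\+\alpha-\+e_2}+\sqrt{\alpha_2+1}\,\delta_{\+\beta,\+\alpha+\+e_2}$, i.e. exactly the entries of $\+A_2$ restricted to the even--odd block. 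An entry is nonzero only when $\+\beta=\+\alpha\pm\+e_2$, so $|\+\alpha|$ and $|\+\beta|$ differ by one; with the degree-first ordering of Appendix \ref{app:A}, pairing each even $\+\alpha$ with the column $\+\alpha+\+e_2$ (the diagonal entry $\sqrt{\alpha_2+1}\neq0$) places the only other surviving entry $\+\alpha-\+e_2$ strictly below it, which is the claimed lower triangularity.

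Finally, full column rank for $I\ge J$ I would establish directly, bypassing the ordering, by solving $\+M_{I,J}\+c=\+0$. Freezing $(\beta_1,\beta_3)$ decouples the system into independent scalar chains: the row $\+\alpha=(\beta_1,k,\beta_3)$ with even $k$ yields $\sqrt{k}\,c_{k-1}+\sqrt{k+1}\,c_{k+1}=0$. The $k=0$ row forces $c_1=0$, and induction up the chain kills every odd coefficient. The only thing to verify—and this is where the hypothesis enters and is the main obstacle—is that, for each frozen $(\beta_1,\beta_3)$, every column actually present (degree $\le J$) has its generating even row present (degree $\le I$); since a present column $(\beta_1,2\ell+1,\beta_3)$ forces $\beta_1+\beta_3+2\ell\le J-1\le I$, the generating row $(\beta_1,2\ell,\beta_3)$ is indeed available whenever $I\ge J$ (in fact $I\ge J-1$ would suffice), so the induction runs to the top. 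Hence $\+c=\+0$ and the columns are independent.
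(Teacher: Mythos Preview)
Your proof is correct. The paper does not actually prove Lemma~\ref{lem:01} in this document---it is quoted from the earlier reference (Yang et al., 2022a)---so there is no in-paper argument to compare against line by line. That said, the ingredients you use match the paper's own computations elsewhere: the factorization $S(\+\alpha,\+\beta)=\delta_{\alpha_1,\beta_1}\delta_{\alpha_3,\beta_3}S_0(\alpha_2,\beta_2)$ and the closed form $2S_0(a,b)=\sqrt{a}\,\delta_{b,a-1}+\sqrt{a+1}\,\delta_{b,a+1}$ for even $a$, odd $b$ are exactly what the paper derives in Appendix~\ref{app:B} by the same parity-plus-recursion trick. Your positive-definiteness argument (writing $\+c^{T}\+S_{I,I}\+c$ as a half-space integral of $\xi_2\omega p^2$ and invoking that a polynomial vanishing on an open set is identically zero) is the standard one and is sound. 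For lower triangularity, your identification of the ``diagonal'' column $\+\alpha+\+e_2$ is right; the order-preserving bijection $\+\gamma\mapsto\+\gamma+\+e_2$ between $\mathbb{I}_{I,e}$ and the odd indices makes the positions line up, and the only other nonzero column $\+\alpha-\+e_2$ then sits strictly to the left. Your full-column-rank argument via decoupled two-term recursions is clean, and your remark that $I\ge J-1$ already suffices is also correct.
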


To apply the solvability theorem \ref{thm:31} for general
gas-surface scattering kernels, we follow Ref.\onlinecite{Sarna2018}'s
idea to modify \eqref{eq:gBC4} as minor as possible such
that the modified BCs have the form as \eqref{eq:stru}.
We first let $J=M-1$ in \eqref{eq:gBC4} and change all
$\+S_{M-1,M}$ to $\+S_{M-1,M-1}$. This way discards the
highest order moment variables when $m>n$. Then we try 
to inverse the coefficient matrix before
$\+w_e+\bar{\+w}_e$ to meet the form in \eqref{eq:stru}. 
However, due to the normalization property \eqref{eq:cl_nor}
of $R(\+\xi'\rightarrow\+\xi)$, the corresponding coefficient matrix
(i.e., $\+I-\+R_{M-1,M-1}$ in \eqref{eq:gBC5})
is irreversible. We should make a clever rank-one modification 
to ensure the invertibility (of $\+I-\hat{\+R}$ in \eqref{eq:gBC5}).
In a word, the modified BCs read as
\begin{gather}
	\label{eq:gBC5}
	\+M_{M,M}^T(\+w_e(0)+\bar{\+w}_e) + \+H(\+w_o(0)+\bar{\+w}_o) 
	= \+0, \\ \notag
	  \+H = \frac{1}{2}\+M_{M-1,M}^T\left\{
		  \+S_{M-1,M-1}^{-1}\left(\+I-\hat{\+R}
			\right)^{-1} 
	\left(\+I+\hat{\+R}\right)\right\}\+M_{M-1,M},\\
\notag \hat{\+R} = \+R_{M-1,M-1} - \frac{\+S_{M-1,M-1}
	\+e\+e^T}{\+e^T\+S_{M-1,M-1}\+e}
	\in\bbR^{n\times n},
\end{gather}
where $\+I$ is the $n$-th order identity matrix and 
$\+e\in\bbR^n$ is a unit vector induced from $\mathbb{I}_{M-1,e}$
with $\+e[{\+0}]=1.$ We claim that the modified BCs 
\eqref{eq:gBC5} has the structure as \eqref{eq:stru}:

\begin{lemma}
	\label{lem:02}
	If $\hat{\+R}$ satisfies $\rho(\hat{\+R})=k<1$ and
	\begin{eqnarray}
		\label{eq:DB}
		\hat{\+R}\+S_{M-1,M-1}=\+S_{M-1,M-1}\hat{\+R}^T,
	\end{eqnarray}
	then the matrix $\+H$ defined in \eqref{eq:gBC5} is
	symmetric positive definite. Here $\rho(\hat{\+R})$
	is the spectral radius defined as the maximum modulus
	eigenvalues of $\hat{\+R}$.
\end{lemma}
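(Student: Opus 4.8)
The plan is to realize $\+H$ as a congruence transform of a manifestly symmetric positive definite matrix, turning the detailed-balance relation \eqref{eq:DB} into a symmetrization and the spectral bound $\rho(\hat{\+R})=k<1$ into sign control on eigenvalues. By Lemma~\ref{lem:01} the matrix $\+S:=\+S_{M-1,M-1}$ is symmetric positive definite, so it admits a symmetric positive definite square root $\+C=\+S^{1/2}$. I would introduce $\+P:=\+C^{-1}\hat{\+R}\+C$, which is similar to $\hat{\+R}$ and hence shares its spectrum; in particular $\rho(\+P)=k<1$, and the bound $k<1$ already shows $1\notin\sigma(\hat{\+R})$, so that $\+I-\hat{\+R}$ is invertible and $\+H$ is well defined.

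The first key step is to observe that \eqref{eq:DB} says exactly that $\+P$ is symmetric. Since $\+C^{T}=\+C$, we have $\+P^{T}=\+C\hat{\+R}^{T}\+C^{-1}$, so $\+P=\+P^{T}$ is equivalent to $\hat{\+R}\+C^{2}=\+C^{2}\hat{\+R}^{T}$, i.e.\ to $\hat{\+R}\+S=\+S\hat{\+R}^{T}$, which is precisely \eqref{eq:DB}. A symmetric $\+P$ has real eigenvalues, and since they coincide with those of $\hat{\+R}$ the hypothesis forces every eigenvalue $\lambda$ of $\+P$ to lie in $(-1,1)$. Consequently $\+I-\+P$ and $\+I+\+P$ are symmetric positive definite and commute, and in an orthonormal eigenbasis of $\+P$ the matrix $\+D:=(\+I-\+P)^{-1}(\+I+\+P)$ is diagonal with entries $(1+\lambda)/(1-\lambda)>0$; hence $\+D$ is symmetric positive definite.

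Next I would push the $\+C$-conjugation through the bracketed factor of $\+H$. Writing $\hat{\+R}=\+C\+P\+C^{-1}$ gives $(\+I-\hat{\+R})^{-1}=\+C(\+I-\+P)^{-1}\+C^{-1}$ and $\+I+\hat{\+R}=\+C(\+I+\+P)\+C^{-1}$, so that $\+S^{-1}(\+I-\hat{\+R})^{-1}(\+I+\hat{\+R})=\+C^{-1}\+D\+C^{-1}$. Hence $\+H=\tfrac12\,\+G^{T}\+D\,\+G$ with $\+G:=\+C^{-1}\+M_{M-1,M}$, a congruence of the symmetric positive definite $\+D$. This makes $\+H$ symmetric at once, and it is positive definite if and only if $\+G$ (equivalently $\+M_{M-1,M}$) has trivial kernel.

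The main obstacle is therefore the invertibility of the square matrix $\+M_{M-1,M}\in\bbR^{n\times n}$, which lies just outside the regime $I\ge J$ covered by Lemma~\ref{lem:01}. My plan is to settle it through the triangular structure: Lemma~\ref{lem:01} still guarantees that $\+M_{M-1,M}$ is lower triangular, so it suffices to check that its diagonal entries do not vanish. Under the bijection $\+\alpha\mapsto\+\alpha+\+e_2$ between $\mathbb{I}_{M-1,e}$ and $\mathbb{I}_{M,o}$ these diagonal entries are $2S(\+\alpha,\+\alpha+\+e_2)$, and a short computation with the Hermite recursion \eqref{eq:defH} (reducing the half-line integral in $\xi_2$ to one-dimensional Gaussian moments) shows each of them is strictly positive. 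I expect the only genuinely fiddly point to be aligning the multi-index ordering of Appendix~\ref{app:A} so that ``diagonal'' refers to this pairing; once that bookkeeping is in place, $\+M_{M-1,M}$ is invertible and $\+H\succ 0$, completing the proof. The rest is routine linear algebra made available by the symmetrization through $\+C$.
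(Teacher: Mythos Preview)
Your argument is correct and shares the paper's central idea: use \eqref{eq:DB} to see that $\+P:=\+S^{1/2}\hat{\+R}\,\+S^{-1/2}$ wait, actually $\+C^{-1}\hat{\+R}\+C$, is symmetric, so that the spectral bound $\rho(\hat{\+R})<1$ controls the sign of the relevant quadratic form. The paper organizes the congruence differently: it sets $\+X=(\+I-\hat{\+R})^{-1}\+M_{M-1,M}$ and rewrites
\[
\+H=\tfrac12\,\+X^{T}\bigl(\+S^{-1}-\hat{\+R}^{T}\+S^{-1}\hat{\+R}\bigr)\+X,
\]
then bounds the middle factor below by $(1-k^{2})\+S^{-1}$ via $\|\+S^{-1/2}\hat{\+R}\+S^{1/2}\|_2=k$. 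Your Cayley-transform packaging $\+H=\tfrac12\,\+G^{T}\+D\,\+G$ with $\+D=(\+I-\+P)^{-1}(\+I+\+P)$ is a bit more direct and avoids the auxiliary norm estimate; the paper's route, in exchange, yields the explicit quantitative lower bound $\+H\succeq\tfrac{1-k^2}{2}\+M_{M-1,M}^{T}\+S^{-1}\+M_{M-1,M}$. On the final step you are in fact more careful than the paper: Lemma~\ref{lem:01} only asserts full column rank of $\+M_{I,J}$ for $I\ge J$, whereas here $I=M-1<M=J$; the paper simply cites Lemma~\ref{lem:01}, while you correctly note the gap and close it via the lower-triangular structure together with the nonvanishing diagonal entries $2S(\+\alpha,\+\alpha+\+e_2)=\sqrt{\alpha_2+1}$ under the order-preserving bijection $\+\alpha\mapsto\+\alpha+\+e_2$ between $\mathbb{I}_{M-1,e}$ and $\mathbb{I}_{M,o}$.
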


\begin{proof}
	By Lemma \ref{lem:01}, we have $\+S_{M-1,M-1}>0$. 
	Since $\rho(\hat{\+R})<1$, the matrix
	$\+I-\hat{\+R}$ is invertible.
	Let $\+X=(\+I-\hat{\+R})^{-1}\+M_{M-1,M}
	\in\bbR^{n\times n}$. Utilizing \eqref{eq:DB}, we have
	\[
		\+S_{M-1,M-1}^{-1}-\hat{\+R}^T\+S_{M-1,M-1}^{-1}\hat{\+R}
		= (\+I-\hat{\+R}^T)\+S_{M-1,M-1}^{-1}(\+I+\hat{\+R}).
	\]
	Noting that $(\+I-\hat{\+R})^{-1}(\+I+\hat{\+R})=
	(\+I+\hat{\+R})(\+I-\hat{\+R})^{-1}$,
	from \eqref{eq:gBC5} we have
	\[
		\+H = \frac{1}{2}\+X^T
		\left(\+S_{M-1,M-1}^{-1}-\hat{\+R}^T\+S_{M-1,M-1}^{-1}
		\hat{\+R}\right)\+X.
	\]
	Although $\hat{\+R}$ may not be symmetric, the condition
	\eqref{eq:DB} gives
	\[ \+S_{M-1,M-1}^{-1/2}\hat{\+R}\+S_{M-1,M-1}^{1/2}
	= \left(\+S_{M-1,M-1}^{-1/2}
		\hat{\+R}\+S_{M-1,M-1}^{1/2}\right)^T.\]
	Since similar matrices have the same eigenvalues,
	we have $\|\+S_{M-1,M-1}^{-1/2}
		\hat{\+R}\+S_{M-1,M-1}^{1/2}\|_2 = k$.
	So for any $\+y\in\bbR^n$, we have 
	\begin{eqnarray*}&&
		\+y^T\left(\+S_{M-1,M-1}^{-1}-\hat{\+R}^T\+S_{M-1,M-1}^{-1}
		\hat{\+R}\right)\+y \\&=& \+y^T\+S_{M-1,M-1}^{-1/2}
		\left(\+I-
		\left(\+S_{M-1,M-1}^{1/2}\hat{\+R}^T\+S_{M-1,M-1}^{-1}
		\hat{\+R}\+S_{M-1,M-1}^{1/2}\right)
		\right)\+S_{M-1,M-1}^{-1/2}\+y \\
		&\geq& \left(1-\|\+S_{M-1,M-1}^{-1/2}
		\hat{\+R}\+S_{M-1,M-1}^{1/2}\|_2^2\right)
		\+y^T\+S_{M-1,M-1}^{-1}\+y \\
		&\geq& \left(1-k^2\right)
		\+y^T\+S_{M-1,M-1}^{-1}\+y.
	\end{eqnarray*}
	Thus, under the lemma's conditions, the matrix
	$\+S_{M-1,M-1}^{-1}-\hat{\+R}^T\+S_{M-1,M-1}^{-1}\hat{\+R}$
	is symmetric positive definite.
	Combined with Lemma \ref{lem:01}, the matrix $\+H$
	in \eqref{eq:gBC5} is also symmetric positive definite.
\end{proof}

\begin{remark}
	\label{rem:01}
	The matrix $\hat{\+R}$ in \eqref{eq:gBC5} is constructed from
	the belief that the solutions of \eqref{eq:KL} with 
	\eqref{eq:gBC4} should be recovered from the solutions of
	\eqref{eq:KL} with the modified BCs \eqref{eq:gBC5} under
	some conditions. When $J=M-1$ and $m=n$ (the condition
	$m=n$ is possible when $\+\alpha\in\bbN$ and $M$ is odd),
	we have $\+M_{M-1,M}=\+M_{M,M}$ invertible and
	$\+S_{M-1,M-1}=\+S_{M-1,M}$.
	So the modified BCs \eqref{eq:gBC5} are equivalent to
\begin{eqnarray}\notag
	(\+S_{M-1,M}-\hat{\+R}\+S_{M-1,M})(\+w_e+\bar{\+w}_e)
	+ \frac{1}{2}(\+M_{M-1,M}+
	\hat{\+R}\+M_{M-1,M})
	(\+w_o+\bar{\+w}_o)=\+0. \notag
\end{eqnarray}
Since $\+M_{M-1,M}$ is lower triangular, when
	$\+w_o[{\+e_1}]=\bar{\+w}_o[{\+e_1}]=0$ and $\hat{\+R}$
	is defined as \eqref{eq:gBC5}, we have
	\[(\hat{\+R}-\+R_{M-1,M-1})\+M_{M-1,M}(\+w_o+\bar{\+w}_o)=\+0.\]
	Compared with \eqref{eq:gBC4}, we try to find $\rho^w\in\bbR$
	such that
	\[
		(\+S_{M-1,M}-\hat{\+R}\+S_{M-1,M})(\+w_e+\rho^w\+e)
		= 
		(\+S_{M-1,M}-\+R_{M-1,M-1}\+S_{M-1,M})\+w_e.
	\]
	If the detailed balance condition \eqref{eq:DB} and
	the normalization condition $\+R_{M-1,M-1}^T\+e=\+e$ hold,
	after some manipulation we find that $\hat{\+R}^T\+e=\+0$ and 
	\[
		\rho^w = \+e^T\+S_{M-1,M}\+w_e/\+e^T\+S_{M-1,M}\+e.	
	\]
	Hence, the recovery is easy to obtain.
\end{remark}

For general gas-surface scattering kernels, if the
conditions in Lemma \ref{lem:02} hold, then Theorem
\ref{thm:31} ensures the solvability of the layer equations
\eqref{eq:KL} with the modified BCs \eqref{eq:gBC5}. We 
can see that the condition \eqref{eq:DB} is naturally 
from the detailed balance property \eqref{eq:cl_deb}
of $R(\+\xi'\rightarrow\+\xi)$. We do not try to check
the condition $\rho(\hat{\+R})<1$ for general BCs.
But as will be shown below, the matrix $\hat{\+R}$ is 
lower triangular for the CL scattering kernel. Hence, the
spectral radius of $\hat{\+R}$ in the CL case is
easy to obtain (see Theorem \ref{thm:scll} below).

	\section{The Cercignani-Lampis Scattering Kernel}
	\label{subsec:32}
	For the CL scattering kernel, we will give the recursion formula
to calculate the coefficient matrices in \eqref{eq:gBC5}. 
The CL scattering kernel $K(\+\xi'\rightarrow\+\xi)$ in \eqref{eq:defR}
has the separability and can write as
\[
	K(\+\xi'\rightarrow\+\xi) = K_1(\xi_1,\xi_1')K_2(\xi_2,\xi_2')
	K_3(\xi_3,\xi_3'),
\]
where for $i=1$ and $i=3$,
\begin{eqnarray}
	K_i(\xi_i,\xi_i') = \frac{1}{\sqrt{2\pi\alpha_t(2-\alpha_t)}}
	\exp\left(-\frac{|\xi_i-(1-\alpha_t)\xi_i'|^2}
			{2\alpha_t(2-\alpha_t)}\right)\omega_0(\xi_i'),
\end{eqnarray}
and 
\begin{eqnarray}
	K_2(\xi_2,\xi_2') = \frac{1}{\sqrt{2\pi}\alpha_n}
	I_0\left(\frac{\sqrt{1-\alpha_n}}{\alpha_n}\xi_2\xi_2'\right)
	\exp\left(-\frac{\xi_2^2+\xi_2'^2}{2\alpha_n}\right).
\end{eqnarray}
The Hermite polynomials $\phi_{\+\alpha}$ are isotropic and
can write as
\[\phi_{\+\alpha}(\+\xi)=\phi_{\alpha_1}(\xi_1)
\phi_{\alpha_2}(\xi_2)
\phi_{\alpha_3}(\xi_3)
,\ \omega(\+\xi)=\prod_{i=1}^3\omega_0(\xi_i),\]
where we denote by $\phi_{\alpha_i}=\phi_{\alpha_i}(\xi_i)
=\phi_{\alpha_i\+e_i}(\+\xi)$ and 
$\omega_0(\xi_i)=(\sqrt{2\pi})^{-1}\exp(-\xi_i^2/2)$.
The task is to calculate the integrals $S(\+\alpha,\+\beta)$
and $R(\+\alpha,\+\beta)$ in \eqref{eq:defS} and \eqref{eq:defR}.
Due to the separability, these integrals are products of
two-dimensional integrals. Here $R(\+\alpha,\+\beta)$
is a combination of the modified Bessel function, Hermite 
polynomials, exponentials, and powers.

According to the orthogonality of Hermite polynomials, we have
\begin{equation}
	S(\+\alpha,\+\beta)
	= \delta_{\alpha_1,\beta_1}\delta_{\alpha_3,
	\beta_3}S_0(\alpha_2,\beta_2), 
\end{equation}
where for $\alpha_2,\beta_2\in\bbN,$ we define the half-space 
integral
\begin{equation}
	\label{eq:def_S0}
S_0(\alpha_2,\beta_2) = \int_0^{+\infty}\!\!
	\xi_2\phi_{\alpha_2}\phi_{\beta_2}
	\omega_0(\xi_2)\, \mathrm{d}\xi_2.\
\end{equation}
A direct calculation gives (see Appendix \ref{app:B})
\begin{lemma}
	When $\alpha_2$ is even and $\beta_2$ is odd, we have
	\[
		S_0(\alpha_2,\beta_2) = \frac{1}{2}\left(
		\sqrt{\alpha_2}\delta_{\beta_2,\alpha_2-1}+
		\sqrt{\alpha_2+1}\delta_{\beta_2,\alpha_2+1}\right).
	\]
	When $\alpha_2$ and $\beta_2$ are both even, we have
	\[
		S_0(\alpha_2,\beta_2) = \frac{1}{\sqrt{2\pi}}
		\frac{\alpha_2+\beta_2+1}{1-(\alpha_2-\beta_2)^2}
		z_{\alpha_2}z_{\beta_2},
	\]
	where $z_0=1$ and $z_{n+1}=-\sqrt{n}z_{n-1}/\sqrt{n+1}.$
\end{lemma}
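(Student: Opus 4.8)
The plan is to reduce everything to the one-dimensional ladder structure of the $\phi_n$. First I would record three elementary facts coming from the recursion \eqref{eq:defH} specialized to one variable: (i) the derivative relation $\phi_n'=\sqrt{n}\,\phi_{n-1}$, whence $(\phi_n\omega_0)'=-\sqrt{n+1}\,\phi_{n+1}\omega_0$ because $\omega_0'=-\xi_2\omega_0$; (ii) the parity $\phi_n(-\xi_2)=(-1)^n\phi_n(\xi_2)$; and (iii) that the numbers $z_n$ in the statement are exactly $z_n=\phi_n(0)$, which I verify by evaluating \eqref{eq:defH} at $\xi_2=0$ (the left-hand side vanishes, giving $\sqrt{n+1}\,\phi_{n+1}(0)=-\sqrt{n}\,\phi_{n-1}(0)$). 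Together with $\omega_0(0)=1/\sqrt{2\pi}$ this lets me trade every boundary value for a $z_n$.

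The first case ($\alpha_2$ even, $\beta_2$ odd) is then immediate from parity. The integrand $\xi_2\phi_{\alpha_2}\phi_{\beta_2}\omega_0$ in \eqref{eq:def_S0} is even, so the half-line integral equals one half of the full-line integral; expanding $\xi_2\phi_{\alpha_2}$ by \eqref{eq:defH} and using the full-range orthogonality yields $S_0(\alpha_2,\beta_2)=\tfrac12(\sqrt{\alpha_2}\,\delta_{\beta_2,\alpha_2-1}+\sqrt{\alpha_2+1}\,\delta_{\beta_2,\alpha_2+1})$.

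The second case ($\alpha_2,\beta_2$ both even) is the substantive one, since now the integrand is odd and the full-line integral vanishes, so the half-line value genuinely records the loss of orthogonality at the origin. I would integrate by parts once, writing $\xi_2\omega_0=-\omega_0'$, to obtain
\[
S_0(\alpha_2,\beta_2)=\omega_0(0)\,z_{\alpha_2}z_{\beta_2}+\sqrt{\alpha_2}\,T(\alpha_2-1,\beta_2)+\sqrt{\beta_2}\,T(\alpha_2,\beta_2-1),
\]
where $T(a,b)=\int_0^{+\infty}\phi_a\phi_b\,\omega_0\,\md\xi_2$ and the single boundary term produced the separable piece $\omega_0(0)z_{\alpha_2}z_{\beta_2}$. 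This transfers the problem to the opposite-parity integrals $T$ (one index odd, one even). For those I would apply the ladder identity in (i) and integrate by parts again; crucially, every second boundary term carries an odd-index factor $\phi_{\mathrm{odd}}(0)=0$ and drops out, leaving the clean two-step descent
\[
\sqrt{b}\,T(a,b)=\omega_0(0)\,z_a z_{b-1}+\frac{\sqrt{a(a-1)}}{\sqrt{b-1}}\,T(a-2,b-2),
\]
which lowers both indices by two. Solving it with the base values $T(0,b)=\omega_0(0)\,z_{b-1}/\sqrt{b}$ and $T(a,1)=-\omega_0(0)\,z_a/(a-1)$ gives $T$ in closed form, and substituting back produces $S_0$.

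The main obstacle is the final algebraic consolidation: the descent presents $S_0$ as a sum of several terms, and collapsing it to the single compact factor $(\alpha_2+\beta_2+1)/(1-(\alpha_2-\beta_2)^2)$ requires systematically clearing the square roots via the $z$-recursion in the forms $\sqrt{a}\,z_a=-\sqrt{a-1}\,z_{a-2}$ and $\sqrt{(a+1)(a+2)}\,z_{a+2}=-(a+1)z_a$. A convenient way to police this bookkeeping is to verify instead that the proposed closed form satisfies the symmetric three-term relation obtained by evaluating $\int_0^{+\infty}\xi_2^{3}\phi_{\alpha_2}\phi_{\beta_2}\omega_0\,\md\xi_2$ in two ways (expanding $\xi_2^2$ against either index), anchored by $S_0(0,0)=1/\sqrt{2\pi}$; after the $z$-substitutions this reduces to a finite rational identity in $d=\alpha_2-\beta_2$ that carries no poles because $d$ is even (so $1-(\alpha_2-\beta_2)^2\neq 0$). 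As an independent cross-check I would note that the entire computation can be packaged in the bilinear generating function $\sum_{a,b}S_0(a,b)\,s^{a}t^{b}/\sqrt{a!\,b!}$, which evaluates in closed form and whose even-even part reproduces both the separable factor $\omega_0(0)z_{\alpha_2}z_{\beta_2}$ and the rational correction.
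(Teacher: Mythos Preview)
Your proposal is correct. The even--odd case is essentially identical in spirit to the paper's argument (parity reduces the half-line integral to half the full-line one, and orthogonality finishes it).

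For the even--even case you take a genuinely different route. You set up a two-step descent recursion for $T(a,b)$ by integrating by parts twice, alternately killing a boundary term via $\phi_{\mathrm{odd}}(0)=0$, and then unwind it. The paper instead exploits the \emph{symmetry} $I(a,b)=I(b,a)$ (your $T$ is the paper's $I$): a single integration by parts gives
\[
\sqrt{\beta_2+1}\,I(\alpha_2+1,\beta_2+1)=\omega_0(0)\,z_{\alpha_2+1}z_{\beta_2}+\sqrt{\alpha_2+1}\,I(\alpha_2,\beta_2),
\]
and swapping the roles of $\alpha_2,\beta_2$ yields a second expression for the same left-hand side. Equating the two solves for $I(\alpha_2,\beta_2)$ \emph{directly} whenever $\alpha_2\neq\beta_2$, with no recursion to iterate and no sum to telescope; the diagonal $I(k,k)=\tfrac12$ follows separately. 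This bypasses precisely the ``final algebraic consolidation'' you flag as the main obstacle. Your descent works and your fallback verification via the three-term relation is a sound way to police the bookkeeping, but the symmetry trick is shorter and delivers the closed form in one stroke.
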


For the CL BCs, the integral $R(\+\alpha,\+\beta)$ involves
the half-space integrals about $K_i(\xi_i,\xi_i')$. When
$i=1$ and $i=3$, the corresponding integrals are explicitly
given in virtue of the symmetry of $K_i(\xi_i,\xi_i')$
(see Appendix \ref{app:C}).

\begin{lemma}
	For $\alpha_1,\beta_1\in\bbN$, the integral
\begin{eqnarray}
	\label{eq:cllT}
T(\alpha_1,\beta_1) = 
	\int_{\bbR^2}\!\!
	\phi_{\beta_1}(\xi_1')
	\phi_{\alpha_1}(\xi_1)K_1(\xi_1,\xi_1')
	\,\mathrm{d}\xi_1'\mathrm{d}\xi_1
\end{eqnarray}
has the explicit expression
	\[
		T(\alpha_1,\beta_1) =
		\delta_{\alpha_1,\beta_1}
		(1-\alpha_t)^{\alpha_1}.
	\]
\end{lemma}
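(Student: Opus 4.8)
The plan is to recognize $K_1$ as a Mehler (Ornstein--Uhlenbeck) kernel and then exploit the fact that the Hermite polynomials $\phi_{\alpha_1}$ are eigenfunctions of the associated Gaussian averaging operator. Setting $r=1-\alpha_t$, so that $\alpha_t(2-\alpha_t)=1-r^2$, I would combine the exponential in $K_1$ with $\omega_0(\xi_1')$ and complete the square to rewrite
\[
	K_1(\xi_1,\xi_1') = \frac{1}{2\pi\sqrt{1-r^2}}
	\exp\left(-\frac{\xi_1^2-2r\xi_1\xi_1'+\xi_1'^2}{2(1-r^2)}\right).
\]
This exhibits $K_1$ as the joint density of a centered bivariate Gaussian with unit variances and correlation $r$; equivalently $K_1(\xi_1,\xi_1')=g(\xi_1\mid\xi_1')\,\omega_0(\xi_1')$, where $g(\cdot\mid\xi_1')$ is the $N(r\xi_1',\,1-r^2)$ density. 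This identification is the crux: it turns $T(\alpha_1,\beta_1)$ into a correlated-Gaussian moment of two Hermite polynomials and, incidentally, exposes the symmetry $K_1(\xi_1,\xi_1')=K_1(\xi_1',\xi_1)$ anticipated in the statement.

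Next I would establish the one-dimensional eigenfunction identity
\[
	\int_{\bbR}\!\!\phi_{\alpha_1}(\xi_1)\,g(\xi_1\mid\xi_1')\,\md\xi_1
	= r^{\alpha_1}\,\phi_{\alpha_1}(\xi_1').
\]
The cleanest route is through the generating function $\sum_{n}(t^n/\sqrt{n!})\phi_n(\xi)=\exp(t\xi-t^2/2)$, which follows directly from the recursion \eqref{eq:defH} together with $\phi_0=1$. Multiplying this series by $g(\xi_1\mid\xi_1')$, integrating in $\xi_1$, and using the Gaussian moment-generating function $\int_{\bbR}\exp(t\xi_1)\,g(\xi_1\mid\xi_1')\,\md\xi_1=\exp(tr\xi_1'+t^2(1-r^2)/2)$, the whole expression collapses to $\exp((tr)\xi_1'-(tr)^2/2)=\sum_n((tr)^n/\sqrt{n!})\phi_n(\xi_1')$. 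Matching the coefficient of $t^{\alpha_1}/\sqrt{\alpha_1!}$ on both sides yields the identity.

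Finally I would substitute this into \eqref{eq:cllT} and carry out the remaining integral:
\[
	T(\alpha_1,\beta_1)
	= \int_{\bbR}\!\!\phi_{\beta_1}(\xi_1')
	\left(\int_{\bbR}\!\!\phi_{\alpha_1}(\xi_1)\,g(\xi_1\mid\xi_1')\,\md\xi_1\right)
	\omega_0(\xi_1')\,\md\xi_1'
	= r^{\alpha_1}\int_{\bbR}\!\!\phi_{\alpha_1}(\xi_1')
	\phi_{\beta_1}(\xi_1')\,\omega_0(\xi_1')\,\md\xi_1'.
\]
The last one-dimensional integral equals $\delta_{\alpha_1,\beta_1}$ by the orthonormality of the $\phi_n$ with respect to $\omega_0$, giving $T(\alpha_1,\beta_1)=\delta_{\alpha_1,\beta_1}(1-\alpha_t)^{\alpha_1}$, as claimed.

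The main obstacle is purely technical: justifying the interchange of summation and integration in the generating-function step. This is harmless for $|r|<1$, since the Gaussian integrals converge absolutely and the partial sums are dominated uniformly, so dominated convergence applies; alternatively one can avoid generating functions altogether by inducting on $\alpha_1$, using the recursion \eqref{eq:defH} and integration by parts against the Gaussian $g$. The fully diffuse edge case $\alpha_t=1$ (so $r=0$) is immediate, since then $K_1(\xi_1,\xi_1')=\omega_0(\xi_1)\omega_0(\xi_1')$ factorizes and $T(\alpha_1,\beta_1)=\delta_{\alpha_1,0}\delta_{\beta_1,0}$, consistent with the convention $0^{\alpha_1}$.
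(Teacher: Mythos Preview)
Your proof is correct and takes a genuinely different route from the paper's. The paper (Appendix~C) proceeds by induction on $\alpha_1$: it first computes $T(0,\beta_1)=\delta_{\beta_1,0}$, then defines $Y_{\alpha_1}=\int_{\bbR}\phi_{\alpha_1}(\xi_1)\exp\bigl(-\frac{(\xi_1-(1-\alpha_t)\xi_1')^2}{2\alpha_t(2-\alpha_t)}\bigr)\,\md\xi_1$ and derives a three-term recursion for $Y_{\alpha_1}$ via integration by parts and the Hermite recursion \eqref{eq:defH}. This yields a recursion for $T(\alpha_1,\beta_1)$ that forces $T(\alpha_1,\beta_1)=0$ when $\beta_1>\alpha_1$; symmetry then kills the case $\beta_1<\alpha_1$, and the diagonal recursion collapses to $T(\alpha_1,\alpha_1)=(1-\alpha_t)T(\alpha_1-1,\alpha_1-1)$. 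Your approach instead identifies $K_1$ as the Mehler kernel and invokes the eigenfunction property of the Ornstein--Uhlenbeck semigroup via the Hermite generating function. This is shorter and more conceptual---it explains \emph{why} the answer is diagonal with geometric eigenvalues rather than discovering it through a recursion---at the modest cost of importing a classical identity. The paper's argument is more self-contained and stays entirely within the recursion framework used elsewhere (e.g., for $S_0$ and $N(\alpha_2;\xi_2')$), which makes it a more uniform fit with the rest of the appendix. Your closing remark about the inductive alternative is, in fact, precisely the paper's method.
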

On the other hand, we do not find a ready-made formula 
to calculate the integral about $K_2(\xi_2,\xi_2')$.
For even $\alpha_2$, we denote by 
\begin{eqnarray}
	N(\alpha_2;\xi_2') = \exp\left(\frac{\xi_2'^2}{2\alpha_n}\right)
	\int_{\xi_2>0}\!\!\xi_2
	K_2(\xi_2,\xi_2')\phi_{\alpha_2}(\xi_2)
	\,\md\xi_2
\end{eqnarray}
and claim that (proof in Appendix \ref{app:S})
\begin{lemma}
	\label{lem:03}
	$N(\alpha_2;\xi_2')$ can be represented as
	\[
		N(\alpha_2;\xi_2') =
		\omega_0(\xi_2')
		\exp\left(\frac{\xi_2'^2}{2\alpha_n}\right)
		\sum_{\beta_2=0}^{\alpha_2}
		r_{\alpha_2,\beta_2}\phi_{\beta_2}(\xi_2'),
	\]
	where $r_{\alpha_2,\beta_2}$ are constant coefficients
	which are zero when $\beta_2$ is odd. The recursion
	relation of $r_{\alpha_2,\beta_2}$ is given in Appendix
	\ref{app:S}.
	In particular, we have
	\[ r_{\alpha_2,\alpha_2} = (1-\alpha_n)^{\alpha_2}.\]
\end{lemma}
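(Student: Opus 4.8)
The plan is to prove the representation by induction on the even order $\alpha_2$, extracting the recursion for the coefficients $r_{\alpha_2,\beta_2}$ along the way. Writing $c=\sqrt{1-\alpha_n}/\alpha_n$ and cancelling the prefactor $\exp(\xi_2'^2/(2\alpha_n))$ against the Gaussian of $K_2$, the object to analyse is
\[
N(\alpha_2;\xi_2') = \frac{1}{\sqrt{2\pi}\alpha_n}\int_0^{+\infty}\!\!
\xi_2\,I_0(c\xi_2\xi_2')\exp\!\left(-\frac{\xi_2^2}{2\alpha_n}\right)\phi_{\alpha_2}(\xi_2)\,\md\xi_2 .
\]
For the base case $\alpha_2=0$ I would invoke the standard Gaussian--Bessel integral $\int_0^{\infty}t\,e^{-pt^2}I_0(bt)\,\md t=\frac{1}{2p}e^{b^2/(4p)}$ with $p=1/(2\alpha_n)$ and $b=c\xi_2'$; since $c^2\alpha_n/2=(1-\alpha_n)/(2\alpha_n)$, this gives $N(0;\xi_2')=(2\pi)^{-1/2}\exp((1-\alpha_n)\xi_2'^2/(2\alpha_n))$, which is exactly $\omega_0(\xi_2')\exp(\xi_2'^2/(2\alpha_n))$ with $r_{0,0}=1$. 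Denote this weight by $G(\xi_2')=\omega_0(\xi_2')\exp(\xi_2'^2/(2\alpha_n))$, so the claim reads $N(\alpha_2;\cdot)=G\times(\text{even polynomial of degree }\alpha_2)$.

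The key structural observation is that this polynomial form is available only for even $\alpha_2$. For even $\alpha_2$ the product $\xi_2\phi_{\alpha_2}(\xi_2)$ is a combination of odd powers $\xi_2^{2j+1}$, and each moment $\int_0^{\infty}\xi_2^{2j+1}e^{-\xi_2^2/(2\alpha_n)}I_0(c\xi_2\xi_2')\,\md\xi_2$ is obtained from the base integral by differentiating $j$ times in $p$, hence is a polynomial in $\xi_2'^2$ times $\exp((1-\alpha_n)\xi_2'^2/(2\alpha_n))$. For odd $\alpha_2$ one instead meets even moments, whose closed form carries a factor $I_0$ with argument proportional to $\xi_2'^2$ and is not of polynomial type; this is why the induction must advance in steps of two. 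To raise the order I would use the twice-iterated Hermite recursion
\[
\xi_2^2\phi_{\alpha_2}=\sqrt{\alpha_2(\alpha_2-1)}\,\phi_{\alpha_2-2}
+(2\alpha_2+1)\,\phi_{\alpha_2}+\sqrt{(\alpha_2+1)(\alpha_2+2)}\,\phi_{\alpha_2+2},
\]
which expresses $N(\alpha_2+2)$ through $N(\alpha_2)$, $N(\alpha_2-2)$, and one integral carrying the extra factor $\xi_2^2$.

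I expect this extra factor $\xi_2^2$ to be the main obstacle: differentiating the kernel in $\xi_2'$ produces the Bessel function $I_1$ rather than $I_0$ and breaks the structure. My remedy is the radial identity
\[
\left(\partial_{\xi_2'}^2+\frac{1}{\xi_2'}\partial_{\xi_2'}\right)I_0(c\xi_2\xi_2')
=c^2\xi_2^2\,I_0(c\xi_2\xi_2'),
\]
which realises multiplication by $\xi_2^2$ inside the integral as the differential operator $\mathcal{D}=\frac{\alpha_n^2}{1-\alpha_n}\big(\partial_{\xi_2'}^2+\frac{1}{\xi_2'}\partial_{\xi_2'}\big)$ acting on $\xi_2'$, never leaving the $I_0$ world. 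Consequently $N(\alpha_2+2)$ is a fixed linear combination of $\mathcal{D}N(\alpha_2)$, $N(\alpha_2)$, and $N(\alpha_2-2)$.

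It then remains to insert the inductive ansatz $N(\alpha_2)=G\sum_{\beta_2}r_{\alpha_2,\beta_2}\phi_{\beta_2}(\xi_2')$ and to evaluate $\mathcal{D}(G\phi_{\beta_2})$. Using $\partial_{\xi_2'}G=\frac{1-\alpha_n}{\alpha_n}\xi_2'\,G$ together with $\phi_{\beta_2}'=\sqrt{\beta_2}\,\phi_{\beta_2-1}$, one checks that $\mathcal{D}(G\phi_{\beta_2})$ is again $G$ times an even Hermite combination (the apparent $1/\xi_2'$ singularity cancels because the bracket is even), so $N(\alpha_2+2)$ keeps the asserted form; collecting the coefficients of $G\phi_{\beta_2}(\xi_2')$ yields the explicit recursion for $r_{\alpha_2+2,\beta_2}$ recorded in Appendix \ref{app:S}. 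The parity claim is then immediate, since the even continuation makes $K_2$, and hence $N(\alpha_2;\cdot)$, even in $\xi_2'$ for even $\alpha_2$, so only even-index $\phi_{\beta_2}$ survive and $r_{\alpha_2,\beta_2}=0$ for odd $\beta_2$. Finally, since only the top-degree part of $\mathcal{D}$ raises the degree, tracking it gives the one-term recursion $r_{\alpha_2+2,\alpha_2+2}=(1-\alpha_n)\,r_{\alpha_2,\alpha_2}$, which together with $r_{0,0}=1$ determines the diagonal coefficient and completes the induction.
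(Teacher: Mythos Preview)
Your argument is correct and takes a genuinely different route from the paper's proof. The paper expands $I_0$ into its power series, identifies the resulting half-space moments $J(2n,k)$ with hypergeometric functions via a table formula, and then invokes Gauss's contiguous relations to obtain a three-term recursion linking $N(2n+2)$, $N(2n)$, $N(2n-2)$; the extra differential operator that appears there is the first-order Euler operator $\alpha_n\,\xi_2'\partial_{\xi_2'}$. You instead stay entirely with the Bessel ODE: the identity $(\partial_{\xi_2'}^2+\xi_2'^{-1}\partial_{\xi_2'})I_0(c\xi_2\xi_2')=c^2\xi_2^2 I_0(c\xi_2\xi_2')$ converts the extra $\xi_2^2$ coming from the Hermite recursion into your second-order radial operator $\mathcal{D}$. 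This is more elementary---no hypergeometric functions, no Gradshteyn--Ryzhik---and makes the closure under even Hermite polynomials transparent. Two small points: (i) your recursion is not literally the one ``recorded in Appendix~\ref{app:S}''; it is a different but equally valid three-term relation (the two agree on the values of $N(\alpha_2)$, and in particular both give the same one-term top-degree recursion $r_{\alpha_2+2,\alpha_2+2}=(1-\alpha_n)r_{\alpha_2,\alpha_2}$); (ii) your operator $\mathcal{D}$ carries a prefactor $\alpha_n^2/(1-\alpha_n)$, so the endpoint $\alpha_n=1$ should be handled separately (there $c=0$, $I_0\equiv 1$, and $N(\alpha_2;\xi_2')$ is constant in $\xi_2'$, making the representation trivial).
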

Due to the orthogonality of Hermite polynomials,
we immediately have 
\begin{eqnarray}
	\label{eq:R}
	R(\+\alpha,\+\beta) = 	\delta_{\alpha_1,\beta_1}
		\delta_{\alpha_3,\beta_3}
		(1-\alpha_t)^{\alpha_1+\alpha_3}r_{\alpha_2,\beta_2},
\end{eqnarray}
where $r_{\alpha_2,\beta_2}=0$ for $\beta_2>\alpha_2.$
So we can conclude that 

\begin{theorem}
	\label{thm:scll}
Suppose $M\geq 3$. For the CL scattering kernel, the 
matrix $\hat{\+R}$ in \eqref{eq:gBC5} is lower triangular.
The spectral radius is 
	\[\rho(\hat{\+R})= \max(|1-\alpha_t|,1-\alpha_n).\]
Hence, when $0<\alpha_t<2$ and $0<\alpha_n\leq 1$,
the conditions in Lemma \ref{lem:02} hold for the CL BCs.
\end{theorem}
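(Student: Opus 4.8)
The plan is to read the spectrum of $\hat{\+R}$ straight off its diagonal, after first showing that the explicit formula \eqref{eq:R} forces $\hat{\+R}$ to be triangular. Write $\+R=\+R_{M-1,M-1}$ and $\+S=\+S_{M-1,M-1}$ for brevity, so that $\hat{\+R}=\+R-(\+e^T\+S\+e)^{-1}\+S\+e\+e^T$. I would prove the three assertions in turn: (i) $\hat{\+R}$ is lower triangular; (ii) its eigenvalues are its diagonal entries, whose maximal modulus is $\max(|1-\alpha_t|,1-\alpha_n)$; and (iii) under $0<\alpha_t<2$, $0<\alpha_n\le 1$ this radius is $<1$ while the detailed-balance identity \eqref{eq:DB} holds, which are precisely the hypotheses of Lemma \ref{lem:02}.

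For (i), \eqref{eq:R} shows that $R(\+\alpha,\+\beta)\neq 0$ forces $\alpha_1=\beta_1$, $\alpha_3=\beta_3$, and, since $r_{\alpha_2,\beta_2}=0$ for $\beta_2>\alpha_2$ by Lemma \ref{lem:03}, also $\beta_2\le\alpha_2$; hence $|\+\beta|\le|\+\alpha|$ with equality only when $\+\beta=\+\alpha$. Because the multi-index ordering of Appendix \ref{app:A} respects total degree, every off-diagonal nonzero entry of $\+R$ has its column index strictly preceding its row index, so $\+R$ is lower triangular. The rank-one correction is $\+c\,\+e^T$ with $\+c=(\+e^T\+S\+e)^{-1}\+S\+e$, so its only nonzero column is the one indexed by $\+0$; since $\+0$ is the first index, that column has no entry strictly above the diagonal, and lower-triangularity is preserved. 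Thus $\hat{\+R}$ is lower triangular.

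For (ii), the eigenvalues of a triangular matrix are its diagonal entries. Away from $\+0$ the correction leaves the diagonal untouched, so $\hat{\+R}[\+\alpha,\+\alpha]=R(\+\alpha,\+\alpha)=(1-\alpha_t)^{\alpha_1+\alpha_3}r_{\alpha_2,\alpha_2}$ for $\+\alpha\neq\+0$; at $\+\alpha=\+0$, using $\+e^T\+S\+e=S(\+0,\+0)$ and the normalization $R(\+0,\+0)=1$ from \eqref{eq:cl_nor}, the correction deletes exactly this unit entry, giving $\hat{\+R}[\+0,\+0]=R(\+0,\+0)-1=0$. This is the purpose of the rank-one modification: it removes the eigenvalue $1$ arising from normalization. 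Since $|1-\alpha_t|<1$ and $0\le 1-\alpha_n<1$, the diagonal values are non-increasing in their two exponents, so over $\+\alpha\neq\+0$ the maximal modulus is attained by letting one exponent vanish and the other be minimal: at $\+\alpha=\+e_1$ (or $\+e_3$), giving $|1-\alpha_t|$, and at $\+\alpha=2\+e_2$, where the diagonal value equals $1-\alpha_n$ by Lemma \ref{lem:03}. The hypothesis $M\ge 3$ enters precisely here, guaranteeing $2\+e_2\in\mathbb{I}_{M-1,e}$; hence $\rho(\hat{\+R})=\max(|1-\alpha_t|,1-\alpha_n)$.

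For (iii), the bounds $0<\alpha_t<2$ and $0<\alpha_n\le 1$ give $|1-\alpha_t|<1$ and $0\le 1-\alpha_n<1$, so $\rho(\hat{\+R})<1$. For \eqref{eq:DB}, I would note that the rank-one term contributes $(\+e^T\+S\+e)^{-1}(\+S\+e)(\+S\+e)^T$ to $\hat{\+R}\+S$, which is symmetric because $\+S$ is symmetric (Lemma \ref{lem:01}); hence \eqref{eq:DB} is equivalent to the symmetry of $\+R\+S$. Using the separable, diagonal-in-$(\alpha_1,\alpha_3)$ structure of $R(\+\alpha,\+\beta)$ and of $S(\+\alpha,\+\beta)$, this collapses to a one-dimensional identity in the normal variable relating $r_{\alpha_2,\beta_2}$ and $S_0$, which is exactly the detailed balance \eqref{eq:cl_deb} of the normal kernel $K_2$. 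Both hypotheses of Lemma \ref{lem:02} then hold. I expect the main obstacle to be part (i): making rigorous that the Appendix \ref{app:A} ordering is degree-compatible and that the rank-one modification stays confined to the first column, since it is this triangular structure that makes the clean reading of the spectrum, and therefore the whole argument, go through.
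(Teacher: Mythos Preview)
Your proposal is correct and follows essentially the same route as the paper: use the product form \eqref{eq:R} together with the degree-respecting ordering to see that $\+R_{M-1,M-1}$ is lower triangular, observe that the rank-one correction $\+S\+e\+e^T$ lives in the first column so $\hat{\+R}$ stays triangular, read the spectrum off the diagonal (with the unit eigenvalue at $\+0$ killed by the correction), and appeal to detailed balance for \eqref{eq:DB}. You supply more justification than the paper does for why the rank-one term preserves triangularity and for the reduction of \eqref{eq:DB} to the one-dimensional reciprocity of $K_2$, but these are elaborations of the same argument rather than a different approach.
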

\begin{proof}
	Recalling that $\hat{\+R}$ is a rank-one modification
	of $\+R_{M-1,M-1}$ (see \eqref{eq:gBC5}) and the elements
	of $\+R_{M-1,M-1}$ are $R(\+\alpha,\+\beta)$ with
	$\alpha_2$ and $\beta_2$ even.
	When $\alpha_2$ and $\beta_2$ are both even, 
	by definition in Appendix \ref{app:A}, $\+\beta$ is
	ordered after $\+\alpha$ if and
	only if $\alpha_i\neq\beta_i$ for any $i=1,3$,
	or $\alpha_i=\beta_i$ for $i=1,3,$ but $\beta_2>\alpha_2$.
	In both cases, from \eqref{eq:R}, we have
	$R(\+\alpha,\+\beta)=0$. Hence, $\+R_{M-1,M-1}$ is
	lower triangular and so is $\hat{\+R}.$

	The eigenvalues of the lower triangular matrix
	are their diagonal elements. By Lemma \ref{lem:03} and \eqref{eq:R},
	we have
	\[
		R(\+\alpha,\+\alpha) = (1-\alpha_t)^{\alpha_1+\alpha_3}
								(1-\alpha_n)^{\alpha_2}.
	\]
	So $\+R_{M-1,M-1}$ must have the eigenvalue one, corresponding
	to $R(\+0,\+0).$ 
	But the rank-one modification removes the eigenvalue one, which
	gives
	\[
		\hat{\+R}[{\+\alpha},{\+\alpha}] 
		= (1-\alpha_t)^{\alpha_1+\alpha_3}
			(1-\alpha_n)^{\alpha_2}-\delta_{\+\alpha,\+0}.
	\]
	So when $0<\alpha_t<2$ and $0<\alpha_n\leq 1$, we have
	$\rho(\hat{\+R})<1.$ By definition, the condition 
	\eqref{eq:DB} in Lemma \ref{lem:02} also holds.
\end{proof}

Combined with the above theorem, Lemma \ref{lem:02} and
Theorem \ref{thm:31}, we obtain the solvability of the
CL BCs \eqref{eq:gBC5}. 

\begin{exmp}[Fully diffuse BCs]
	Now $\alpha_t=\alpha_n=1$ and the conditions in 
	Lemma \ref{lem:02} hold.
	From \eqref{eq:R} and the recursion relation of
	$r_{\alpha_2,\beta_2}$ in Appendix \ref{app:S}, we
	find that
	$\hat{\+R}=\+0.$
\end{exmp}

\begin{exmp}[Specular BCs]
This is a limiting case when $\alpha_t\rightarrow 0$ and
$\alpha_n\rightarrow 0.$ Now according to Theorem \ref{thm:scll},
the conditions in Lemma \ref{lem:02} do not hold. 
However, the direct calculation from \eqref{eq:R}
shows that $\+R_{M-1,M-1}$ should be an identity matrix.
So the original BCs \eqref{eq:gBC4} give
	\[
	\+M_{M-1,M}(\+w_o+\bar{\+w}_o)=\+0 \quad\Rightarrow
	\quad \+w_o+\bar{\+w}_o=\+0.
	\]
We can easily check the solvability in this special case.
\end{exmp}

In conclusion, for the CL scattering kernel, all the integrals
in \eqref{eq:gBC5} are determined by explicit recursion relations.
The procedure avoids numerical integration and the evaluation
of the modified Bessel function. The total computation cost of
$S(\+\alpha,\+\beta)$ and $R(\+\alpha,\+\beta)$ is $O(M^2)$ operations. 
For more general BCs, we may have to utilize numerical integration
to calculate the half-space integrals,
where Gauss-Hermite quadratures or other ununiform quadratures 
should be cleverly used.

	\section{Slip and Jump Coefficients}
	\label{sec:4}
	\def\dpath{data}
	\newcommand\dvel{(\partial U/\partial x_2)_{\infty}}
\newcommand\dtem{(\partial T/\partial x_1)_{\infty}}
\newcommand\dtems{(\partial T/\partial x_2)_{\infty}}

\subsection{Mathematical formulation}
In the spirit of the Chapman-Enskog expansion, \cite{CC1989} the 
half-space problems have the same governing equations
\eqref{eq:KL} but different driven terms $\bar{\+w}$ in 
the BCs \eqref{eq:gBC5}.

In Kramers' problem, \cite{Kramers1949} the tangential
flow in the Knudsen layer is assumed to be driven by the 
normal velocity gradient at infinity. So Kramers'
problem is defined as the layer equations \eqref{eq:KL}
with the BCs, denoting by 
$\+B=[\+M_{M,M}^T,\+H]\in\bbR^{n\times N}$ in \eqref{eq:gBC5},
\begin{eqnarray}
	\label{eq:Kra}
	\+B\left(\+w-\dvel\+z_k+\bar{\+w}_0\right) = \+0. 
\end{eqnarray}
Here $\dvel$ is a given constant and
$\bar{\+w}_0\in\mathrm{Null}(\+Q)$ with 
$\+\varphi_2^T\bar{\+w}_0=\bar{\+w}_0[{\+e_2}]=0.$
The given driven term $\+z_k\in\bbR^N$ belongs to 
$\mathrm{Null}(\+Q)^{\perp}$ and is derived from the
Chapman-Enskog expansion, satisfying
\[\+Q\+z_k = \+r_{12},\]
where $\+Q$ is the coefficient matrix in \eqref{eq:KL},
corresponding to the linearized Boltzmann operator, and 
$\+r_{12}\in\bbR^N$ is induced from $\mathbb{I}_M$ with
$\+r_{12}[{\+e_1+\+e_2}]=1.$
Due to Theorem \ref{thm:31}, we can solve a linear relation
between $\+\varphi_1^T\bar{\+w}_0$ and $\dvel$.
Denote by $\bar{u}=\+\varphi_1^T\bar{\+w}_0$, then we 
define the viscous-slip coefficient $\zeta_0$ as
\begin{equation}
	\label{eq:etavs}
	\zeta_0 = \frac{\bar{u}}{\sqrt{2}\gamma_1\dvel},
	\quad \gamma_1=\+r_{12}^T\+z_k.
\end{equation}
Here $\gamma_1$ is a normalized viscosity coefficient
and $\zeta_0$ is the ratio of the slip length
to the mean free path. In the Knudsen layer,
the normalized velocity profile $u_d(y)$ is defined by
\[
u_d(y) = -\frac{{u}_{1}(y)}{\sqrt{2}\gamma_1\dvel},
\]
where $u_1(y)=\+w[{\+e_1}]$.

Analogously, the thermal creep problem considers the
tangential flow driven by the tangential gradient of the
temperature. According to the Chapman-Enskog procedure,
we define the thermal creep problem as
the layer equations \eqref{eq:KL} with the BCs
\begin{eqnarray}
	\label{eq:Cre}
	\+B\left(\+w-\dtem\+z_c+\bar{\+w}_0\right) = \+0, 
\end{eqnarray}
where $\dtem$ is a given constant and 
$\+z_c\in\mathrm{Null}(\+Q)^{\perp}$ is given as the solution of
\[\+Q\+z = \+s_{1}.\]
Here $\+s_1$ is induced from $\mathbb{I}_M$ with
$\+s_1[{3\+e_1}]=\sqrt{3/2},\
\+s_1[{\+e_1+2\+e_d}]=\sqrt{1/2},\ d\neq 1.$
Similarly we define the thermal-slip coefficient as
\begin{equation}
	\label{eq:etat}
	\zeta_1= \frac{\bar{u}}{2\gamma_2\dtem},\quad
	\gamma_2 = \frac{2}{5}\+s_1^T\+z_c,
\end{equation}
and the normalized velocity in the Knudsen layer is
\[
	u_d(y) = -\frac{{u}_{1}(y)}{2\gamma_2\dtem}.
\]
Here $\gamma_2$ can be seen as the normalized 
thermal conductivity coefficient.

The temperature jump problem \cite{Welander1954} is a thermal version
of Kramers' problem. The problem considers the heat transfer
in the Knudsen layer driven by the normal temperature
gradient of the bulk flow. This gives the layer equations
\eqref{eq:KL} with the BCs
\begin{eqnarray}
	\label{eq:Temp}
	\+B\left(\+w-\dtems\+z_t+\bar{\+w}_0\right) = \+0, 
\end{eqnarray}
where $\dtems$ is a given constant and $\+z_t
\in\mathrm{Null}(\+Q)^{\perp}$ is given as the solution of
\[\+Q\+z = \+s_{2}.\]
The non-zero entries of $\+s_2$ are 
$\+s_2[{3\+e_2}]=\sqrt{3/2},\
\+s_2[{\+e_2+2\+e_d}]=\sqrt{1/2},\ d\neq 2.$ 
Now we let $\bar{\+w}[{2\+e_i}] 
= \bar{\theta}/{\sqrt{2}}$ and the temperature-jump
coefficient is analogously defined as
\begin{equation}
	\label{eq:zeta3}
	\zeta_2 = \frac{\bar{\theta}}{\sqrt{2}\gamma_2\dtems}.
\end{equation}
The relation \eqref{eq:macro_v} gives
$\displaystyle\sum_{i=1}^3\+w[{2\+e_i}] = \displaystyle
\frac{3\sqrt{2}}{2}{\theta}(y)$ and we define the normalized
temperature in the Knudsen layer as
\begin{equation*}
	\theta_d(y) = -\frac{\theta(y)}{\sqrt{2}\gamma_2\dtems}.
\end{equation*}

The above half-space problems can be solved universally
by the method proposed in our early work. \cite{Yang2022a}
The method is analytical in principle and concluded as follows:
\begin{enumerate}
	\item Solve the generalized eigenvalue problem of
		$(\+A_2,\+Q)$ to obtain general solutions to
		\eqref{eq:KL}. 
	\item Plug the necessary conditions such that 
		the general solution of \eqref{eq:KL} satisfies
		$\+w(\infty)=\+0$ into the BCs \eqref{eq:gBC5}. 
		Solve the obtained linear system to get $\+w(0)$.
	\item Obtain $\+w(y)$ according to the expressions
		of the general solution to \eqref{eq:KL}.
\end{enumerate}

The calculation of the matrices
$\+Q$ and $\+B$ would be another challenge because, by definition,
their elements are high-dimensional integrals (eight-dimensional
for $\+Q$ and six-dimensional for $\+B$). We use the
ready-made $\+Q$ calculated by numerical integration 
in Ref.\onlinecite{Wang2019} for IPL potentials. The matrix $\+B$
for the CL scattering kernel is calculated by the recursion
relations in Sec.\ref{subsec:32}. As will be shown below,
the analytical procedure
provides us with not only an accurate result of $\+B$, but also
explicit expressions of the slip and jump coefficients in terms of
the ACs.

To reduce the computation cost, 
we consider the BGK-type approximation of $\+Q$. \cite{Wang2019}
Namely, for a given constant $L$,
$\+Q[{\+\alpha},{\+\beta}]$ are exactly calculated
from the linearized Boltzmann operator when $|\+\alpha|,|\+\beta|\leq
L$, while the remaining part of $\+Q$ is approximated
by a diagonal matrix.
Combined with the sparsity pattern of $\+A_2,\ \+B$ and $\bar{\+w}$,
we can extract $O(ML^2)$ effective equations from the total $O(M^3)$
equations in \eqref{eq:KL}. The reduced procedure
has been described in detail in Ref.\onlinecite{Yang2022a}, 
which makes the layer
equations \eqref{eq:KL} computable when $M$ is large.
In this paper, we let $L=20$ and $M$ may be several hundred.

	\subsection{Numerical results}
As far as we know, very little data on slip and jump
coefficients are available for the IPL intermolecular 
potentials and CL gas-surface interaction. The subsection
validates the efficiency and accuracy of the moment model
in the above cases for different moment order $M$.

We first consider the hard-sphere (HS) case where $\eta=\infty$ in
the IPL model. In Fig.\ref{fig:01}, we let $\alpha_t=\alpha_n=1$ and 
calculate the coefficients $\zeta_i,i=0,1,2,$ for $M$ ranging from
$5$ to $84$. The label ``$N$'' represents the results given by the
modified BCs \eqref{eq:gBC5}, while ``$G$'' represents the results
given by the Grad BCs \eqref{eq:gBC4}. We see that all the 
coefficients will converge when $M$ becomes larger. The 
results approach their limit from both ends according to
the parity of $M$ for the modified BCs, while only from 
below in the case of the Grad BCs.

Note that the instability of
\eqref{eq:gBC4}, if exists, is about the non-homogeneous term in the layer
equations, \cite{Yang2022b} which would not affect the solvability
of homogeneous equations. In comparison, we
can see that the boundary stabilization does not affect the accuracy
too much. 

\begin{figure}[!htb]
\pgfplotsset{width=0.45\textwidth}
\centering
\begin{tikzpicture} 
\begin{axis}[
    xlabel=$M$, 
	ylabel=$\zeta_{0}$, 
    tick align=outside, 
    legend style={at={(0.73,0.3)},anchor=north} 
    ]
\addplot[smooth,mark=*,blue] table {\dpath/N_t1n1_zeta0.dat};
\addlegendentry{N}
\addplot[smooth,mark=square,red] table {\dpath/G_t1n1_zeta0.dat};
\addlegendentry{G}
\end{axis}
\end{tikzpicture}
\hskip 3pt
\begin{tikzpicture} 
\begin{axis}[
    xlabel=$M$, 
	ylabel=$\zeta_{1}$, 
    tick align=outside, 
    legend style={at={(0.73,0.3)},anchor=north} 
    ]
\addplot[smooth,mark=*,blue] table {\dpath/N_t1n1_zeta1.dat};
\addlegendentry{N}
\addplot[smooth,mark=square,red] table {\dpath/G_t1n1_zeta1.dat};
\addlegendentry{G}
\end{axis}
\end{tikzpicture}
\hskip 3pt
\begin{tikzpicture} 
\begin{axis}[
    xlabel=$M$, 
	ylabel=$\zeta_{2}$, 
    tick align=outside, 
    legend style={at={(0.73,0.3)},anchor=north} 
    ]
\addplot[smooth,mark=*,blue] table {\dpath/N_t1n1_zeta2.dat};
\addlegendentry{N}
\addplot[smooth,mark=square,red] table {\dpath/G_t1n1_zeta2.dat};
\addlegendentry{G}
\end{axis}
\end{tikzpicture}

\caption{Convergence tests of the coefficients $\zeta_i$ in the HS
model, with $\alpha_t=\alpha_n=1$.
}
\label{fig:01}
\end{figure}
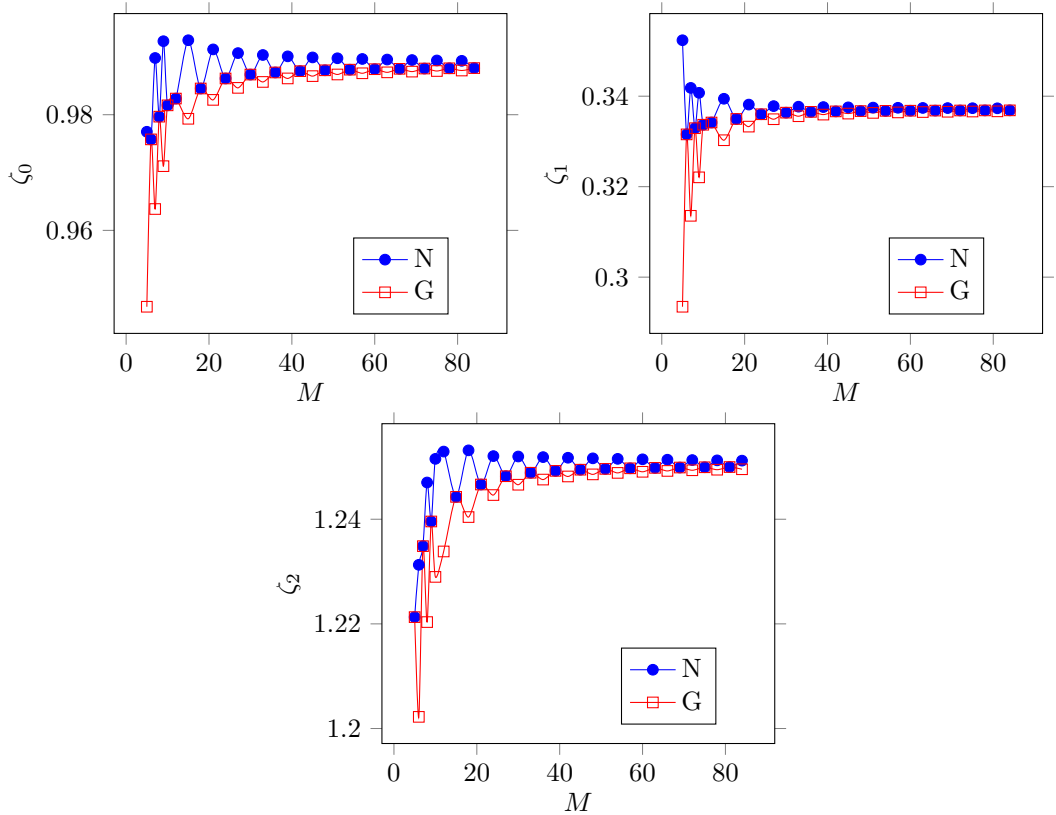

In Fig.\ref{fig:02}, we choose $\alpha_t$ and $\alpha_n$
as their corner values, i.e., $\alpha_t$ close to $0$ or $2$
and $\alpha_n$ close to $0$. The convergence of the viscous-slip
coefficient $\zeta_0$ is also observed.
This shows the stability of our algorithm at the limiting cases.
Similar results are observed for the thermal-slip coefficient
$\zeta_1$ and temperature-jump coefficient $\zeta_2$ (not exhibited
here).

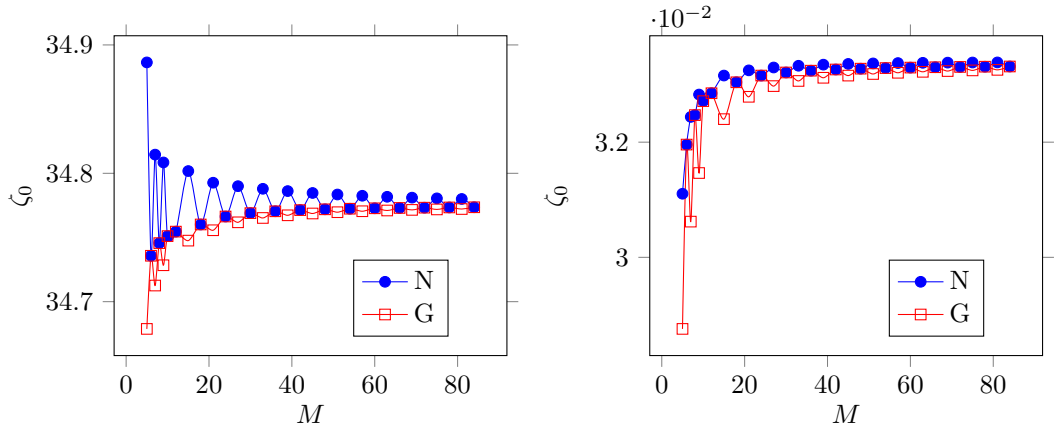
\begin{figure}[!htb]
\pgfplotsset{width=0.45\textwidth}
\centering
\begin{tikzpicture} 
\begin{axis}[
    xlabel=$M$, 
	ylabel=$\zeta_{0}$, 
    tick align=outside, 
    legend style={at={(0.73,0.3)},anchor=north} 
    ]
\addplot[smooth,mark=*,blue] table {\dpath/N_t0n0_zeta0.dat};
\addlegendentry{N}
\addplot[smooth,mark=square,red] table {\dpath/G_t0n0_zeta0.dat};
\addlegendentry{G}
\end{axis}
\end{tikzpicture}
\hskip 3pt
\begin{tikzpicture} 
\begin{axis}[
    xlabel=$M$, 
	ylabel=$\zeta_{0}$, 
    tick align=outside, 
    legend style={at={(0.73,0.3)},anchor=north} 
    ]
\addplot[smooth,mark=*,blue] table {\dpath/N_t2n0_zeta0.dat};
\addlegendentry{N}
\addplot[smooth,mark=square,red] table {\dpath/G_t2n0_zeta0.dat};
\addlegendentry{G}
\end{axis}
\end{tikzpicture}

\caption{Convergence tests of the coefficient $\zeta_0$ in the HS
model. Left: $\alpha_t=0.05,\alpha_n=0.05.$ Right: $\alpha_t=1.95,
\alpha_n=0.05$.
}
\label{fig:02}
\end{figure}

We then study the influence of the moment order. For
$M=4,10,50$, we calculate the slip coefficients $\zeta_i,i=0,1,$ 
for $\alpha_n\in[0,1],\ \alpha_t\in[0.25,2]$. For $M=5,11,51$,
we calculate the temperature-jump coefficient $\zeta_2$ for
$\alpha_n\in[0,1],\ \alpha_t\in[0.25,1].$ Note that when $\alpha_n$
is fixed, $\alpha_t$ and $2-\alpha_t$ would give the same
jump coefficient $\zeta_2$. So we cut the range of $\alpha_t$
in half in the temperature jump problem. 
The results are individually shown in Tab.\ref{tab:01},
Tab.\ref{tab:02}, and Tab.\ref{tab:03}. Because the parity
of $M$ is the same in one table, we expect to see
the convergence from one side in the same table.

\begin{table}[!htb] 
\centering 
\caption{Viscous-slip coefficient $\zeta_0$ for the Cercignani-Lampis
BCs in the HS model}
\label{tab:01}
\resizebox{.85\textwidth}{!}{
\begin{tabular}{ccccccc} 
	$\alpha_t$ & $M$ & $\alpha_n=0$ & $\alpha_n=0.25$ & $\alpha_n=0.5$ &
	$\alpha_n=0.75$ & $\alpha_n=1$ \\ 
\hline
0.25 & 4 & 6.33938 & 6.32513 & 6.31093 & 6.29677 & 6.28265 \\ 
& 10 & 6.37510 & 6.35349 & 6.33403 & 6.31645 & 6.30045 \\ 
& 50 & 6.39435 & 6.36519 & 6.34284 & 6.32364 & 6.30659 \\ 
& Ref.\onlinecite{Su2019} & & 6.365427 & 6.343336 & 6.324267 & 6.307321 \\
0.5 & 4 & 2.77490 & 2.76541 & 2.75593 & 2.74648 & 2.73704 \\ 
& 10 & 2.80361 & 2.78951 & 2.77674 & 2.76513 & 2.75449 \\ 
& 50 & 2.81737 & 2.79906 & 2.78456 & 2.77193 & 2.76060 \\ 
& Ref.\onlinecite{Su2019} & & 2.799516 & 2.785158 & 2.772602 & 2.761338 \\
0.75 & 4 & 1.57406 & 1.56933 & 1.56459 & 1.55987 & 1.55515 \\ 
& 10 & 1.59673 & 1.58982 & 1.58354 & 1.57779 & 1.57249 \\ 
& 50 & 1.60617 & 1.59751 & 1.59046 & 1.58423 & 1.57859 \\ 
& Ref.\onlinecite{Su2019} & & 1.598122 & 1.591127 & 1.584932 & 1.579323 \\
1.0 & 4 & 0.964293 & 0.964293 & 0.964293 & 0.964293 & 0.964293 \\ 
& 10 & 0.981622 & 0.981622 & 0.981622 & 0.981622 & 0.981622 \\ 
& 50 & 0.987722 & 0.987722 & 0.987722 & 0.987722 & 0.987722 \\ 
& Ref.\onlinecite{Su2019} & & 0.988451 & 0.988451 & 0.988451 & 0.988451 \\
1.25 & 4 & 0.590920 & 0.595624 & 0.600333 & 0.605047 & 0.609765 \\ 
& 10 & 0.603435 & 0.610081 & 0.616170 & 0.621810 & 0.627080 \\ 
& 50 & 0.607019 & 0.614835 & 0.621529 & 0.627585 & 0.633179 \\ 
& Ref.\onlinecite{Su2019} & & 0.615670 & 0.622315 & 0.628343 & 0.633906 \\
1.5 & 4 & 0.335520 & 0.344908 & 0.354315 & 0.363740 & 0.373184 \\ 
& 10 & 0.343598 & 0.356670 & 0.368685 & 0.379875 & 0.390396 \\ 
& 50 & 0.345383 & 0.360302 & 0.373369 & 0.385333 & 0.396489 \\ 
& Ref.\onlinecite{Su2019} & & 0.361248 & 0.374217 & 0.386121 & 0.397213 \\
1.75 & 4 & 0.147115 & 0.161201 & 0.175327 & 0.189493 & 0.203700 \\ 
& 10 & 0.151032 & 0.170357 & 0.188173 & 0.204852 & 0.220632 \\ 
& 50 & 0.151646 & 0.173077 & 0.192246 & 0.210003 & 0.226713 \\ 
& Ref.\onlinecite{Su2019} & & 0.174178 & 0.193187 & 0.210840 & 0.227456 \\
2.0 & 4 & 0 & 0.0188546 & 0.0377750 & 0.0567615 & 0.0758145 \\ 
& 10 & 0 & 0.0254564 & 0.0489856 & 0.0711298 & 0.0922059 \\ 
& 50 & 0 & 0.0274630 & 0.0525196 & 0.0759943 & 0.0982846 \\ 
& Ref.\onlinecite{Su2019} & & 0.028851 & 0.053665 & 0.076984 & 0.099153 \\
\end{tabular}
}

\end{table}

In Tab.\ref{tab:01}, the highly accurate 
viscous-slip coefficients calculated from the 
LBE \cite{Su2019} are included.
We see that when $\alpha_t<1$, the slip
coefficient $\zeta_0$ slightly decreases when $\alpha_n$ increases.
The trend reverses when $\alpha_t>1$. When
$\alpha_t=1$, the slip coefficient $\zeta_0$ does not vary with
$\alpha_n$. When $\alpha_t<1$, the moment model with $M=4$
has already given accurate $\zeta_0$, whose relative errors, compared 
to the results of $M=50$ or Ref.\onlinecite{Su2019}, are less than
$3\%.$ When $\alpha_t$ and
$\alpha_n$ are both large, the lowest order moment model with $M=4$
is not accurate, where the relative errors are larger than $10\%.$ But
the moment model with mild order such as $M=10$ performs well, obtaining
the relative errors less than $5\%$ even in some limiting cases.
The BGK-type approximation of $\+Q$ may be another reason that 
the obtained $\zeta_0$ when $M=50$ slightly
deviates from the results of Ref.\onlinecite{Su2019}, i.e., with the
relative errors less than $0.1\%$ in most cases.

\begin{table}[!htb] 
\centering 
\caption{Thermal-slip coefficient $\zeta_1$ for the Cercignani-Lampis
BCs in the HS model}
\label{tab:02}
\resizebox{.85\textwidth}{!}{
\begin{tabular}{ccccccc} 
	$\alpha_t$ & $M$ & $\alpha_n=0$ & $\alpha_n=0.25$ & $\alpha_n=0.5$ &
	$\alpha_n=0.75$ & $\alpha_n=1$ \\ 
\hline
0.25 & 4 & 0.262909 & 0.281742 & 0.300515 & 0.319229 & 0.337884 \\ 
& 10 & 0.268408 & 0.288734 & 0.308258 & 0.327055 & 0.345200 \\ 
& 50 & 0.270170 & 0.290853 & 0.310756 & 0.329846 & 0.348224 \\ 
& Ref.\onlinecite{Siewert2003c} & 0.26960 & 0.29049 & 0.31041 & 0.32950 &
	0.34787 \\
0.5 & 4 & 0.281290 & 0.293648 & 0.305982 & 0.318290 & 0.330573 \\ 
& 10 & 0.287068 & 0.300106 & 0.312749 & 0.325029 & 0.336978 \\ 
& 50 & 0.289701 & 0.302666 & 0.315457 & 0.327896 & 0.339998 \\ 
	& Ref.\onlinecite{Siewert2003c} & 0.28905 & 0.30221 & 0.31503 &
	0.32748 & 0.33958 \\
0.75 & 4 & 0.303436 & 0.309567 & 0.315691 & 0.321809 & 0.327921 \\ 
& 10 & 0.309606 & 0.315927 & 0.322109 & 0.328162 & 0.334095 \\ 
& 50 & 0.312635 & 0.318802 & 0.325009 & 0.331123 & 0.337129 \\ 
	&Ref.\onlinecite{Siewert2003c} & 0.31206 & 0.31834 & 0.32456 & 
	0.33068 & 0.33668 \\
1.0 & 4 & 0.327544 & 0.327544 & 0.327544 & 0.327544 & 0.327544 \\ 
& 10 & 0.333690 & 0.333690 & 0.333690 & 0.333690 & 0.333690 \\ 
& 50 & 0.336727 & 0.336727 & 0.336727 & 0.336727 & 0.336727 \\ 
	&Ref.\onlinecite{Siewert2003c} & 0.33628 & 0.33628 & 0.33628 
	& 0.33628 &0.33628  \\
1.25 & 4 & 0.351523 & 0.345443 & 0.339357 & 0.333265 & 0.327167 \\ 
& 10 & 0.357081 & 0.351070 & 0.345103 & 0.339177 & 0.333287 \\ 
& 50 & 0.359766 & 0.354113 & 0.348222 & 0.342278 & 0.336326 \\ 
	&Ref.\onlinecite{Siewert2003c}& 0.35944 & 0.35369 & 0.34778 &
	0.34183 & 0.33588 \\
1.5 & 4 & 0.372951 & 0.360884 & 0.348794 & 0.336680 & 0.324542 \\ 
& 10 & 0.377351 & 0.365636 & 0.353923 & 0.342203 & 0.330476 \\ 
& 50 & 0.379360 & 0.368544 & 0.357081 & 0.345375 & 0.333538 \\ 
	&Ref.\onlinecite{Siewert2003c} & 0.37915 & 0.36813 & 0.35663 &
	0.34491 & 0.33306 \\
1.75 & 4 & 0.389169 & 0.371334 & 0.353448 & 0.335511 & 0.317523 \\ 
& 10 & 0.392014 & 0.374964 & 0.357779 & 0.340452 & 0.322989 \\ 
& 50 & 0.393091 & 0.377644 & 0.360983 & 0.343752 & 0.326151 \\ 
	&Ref.\onlinecite{Siewert2003c} & 0.39299 & 0.37723 & 0.36049 &
	0.34323 & 0.32562 \\
2.0 & 4 & 0.397561 & 0.374363 & 0.351085 & 0.327725 & 0.304283 \\ 
& 10 & 0.398938 & 0.377003 & 0.354691 & 0.332010 & 0.308984 \\ 
& 50 & 0.398936 & 0.379474 & 0.358082 & 0.335650 & 0.312491 \\ 
	&Ref.\onlinecite{Siewert2003c}& 0.39894 & 0.37904 & 0.35751 &
	0.33502 & 0.31183\\
\end{tabular}
}

\end{table}

In Tab.\ref{tab:02}, we list thermal-slip coefficients for the CL BCs
and compare our results with the LBE's. \cite{Siewert2003c,Su2020}
We find that the relative errors of the moment model with $M=50$
and Ref.\onlinecite{Siewert2003c} are about $0.1\%$ in 
most cases. Compared to the viscous-slip coefficient $\zeta_0$,
the variation of $\zeta_1$ is smaller when $\alpha_t$ changes.
When $\alpha_t$ is fixed and $\alpha_n$ increases, the thermal-slip
coefficient $\zeta_1$ decreases when $\alpha_t<1$, remaining unchanged
when $\alpha_t=1$, and increases when $\alpha_t>1.$
In the thermal creep problem, the moment model with
$M=4$ is enough to provide
an accurate $\zeta_1$ with the relative errors less than $3\%$
compared with Ref.\onlinecite{Siewert2003c}.

\begin{table}[!htb] 
\centering 
\caption{Temperature-jump coefficient $\zeta_2$ for the Cercignani-Lampis
BCs in the HS model}
\label{tab:03}
\resizebox{.85\textwidth}{!}{
\begin{tabular}{ccccccc} 
	$\alpha_t$ & $M$ & $\alpha_n=0$ & $\alpha_n=0.25$ & $\alpha_n=0.5$ &
	$\alpha_n=0.75$ & $\alpha_n=1$ \\ 
\hline
0.25 & 5 & 9.99768 & 5.67986 & 3.73807 & 2.63537 & 1.92775 \\ 
& 11 & 10.1069 & 5.71826 & 3.76224 & 2.65729 & 1.95137 \\ 
& 51 & 10.1558 & 5.73315 & 3.77200 & 2.66686 & 1.96228 \\ 
& Ref.\onlinecite{Siewert2003c} & 10.151 & 5.7318 & 3.7707 & 2.6655 & 1.9609\\
0.5 & 5 & 5.76656 & 3.80951 & 2.69240 & 1.97267 & 1.47398 \\ 
& 11 & 5.86585 & 3.85425 & 2.71941 & 1.99401 & 1.49444 \\ 
& 51 & 5.90485 & 3.87043 & 2.72920 & 2.00205 & 1.50267 \\ 
& Ref.\onlinecite{Siewert2003c} & 5.9030 & 3.8696 & 2.7282 & 2.0010 & 1.5015\\
0.75 & 5 & 4.57861 & 3.16218 & 2.28852 & 1.69885 & 1.27783 \\ 
& 11 & 4.67225 & 3.20886 & 2.31758 & 1.72129 & 1.29831 \\ 
& 51 & 4.70607 & 3.22518 & 2.32781 & 1.72934 & 1.30606 \\ 
& Ref.\onlinecite{Siewert2003c} & 4.7049 & 3.2245 & 2.3270 & 1.7284 & 1.3050 \\
1.0 & 5 & 4.28115 & 2.98966 & 2.17682 & 1.62125 & 1.22129 \\ 
& 11 & 4.37295 & 3.03675 & 2.20650 & 1.64413 & 1.24193 \\ 
& 51 & 4.40516 & 3.05304 & 2.21688 & 1.65225 & 1.24962 \\ 
& Ref.\onlinecite{Siewert2003c} & 4.4041 & 3.0524 & 2.2161 & 1.6514 & 1.2486 \\
\end{tabular}
}

\end{table}

In Tab.\ref{tab:03}, we compare our temperature-jump coefficients
with the results of the LBE. \cite{Siewert2003c}
We find that the temperature-jump coefficient $\zeta_2$ decreases when
$\alpha_n$ and $\alpha_t\leq 1$ turn larger.
Analogously, the moment model with $M=5$ seems enough
to provide a relatively accurate temperature-jump coefficient.

For the sake of brevity, we do not show the results of other
IPL potentials here, although the comparison has been accomplished
for the Maxwell molecules ($\eta=5$) and some other available
data in Ref.\onlinecite{Su2019,Su2020}. In a word, all the results 
support that a mild moment model with $M\approx 10$ can capture
slip and jump coefficients well even in some limiting cases.

We compare the wall clock time used for moment models
with different $M$. Our code is not optimized and implemented
by MATLAB R2019, running on the laptop with i7-8550U CPU @ 1.80GHz.
In Fig.\ref{fig:04}, the label ``time A'' contains the cost to 
generate $\+B$ in the BCs, solving the
generalized eigenvalue problem of $(\+A_2,\+Q)$ and 
the linear algebraic system, but does not include the generation 
of $\+Q$. That's to say, $\+Q$ is calculated beforehand and 
read from the file.
In reality, when the model is fixed, the eigenvalue problem only needs
to solve once. So we also consider the ``time B'', which does not
contain the cost to solve the generalized eigenvalue problem
compared with the ``time A''.

In Fig.\ref{fig:04}, we repeatly calculate the coefficients
for $100$ times and take the average wall time. We find that
the main cost comes from solving the generalized eigenvalue
problem. The ``time B'' for $M\leq 32$ is less than $1$ms.
When $M=64$ (not shown in the figure), for the viscous-slip
coefficient $\zeta_0$,
the ``time A'' is about $3.57$s and the ``time B'' is about
$0.42$s. Combined with the previous results that a mild order
moment model performs well,
we conclude that the moment model is very efficient and 
accurate in capturing slip and jump coefficients.

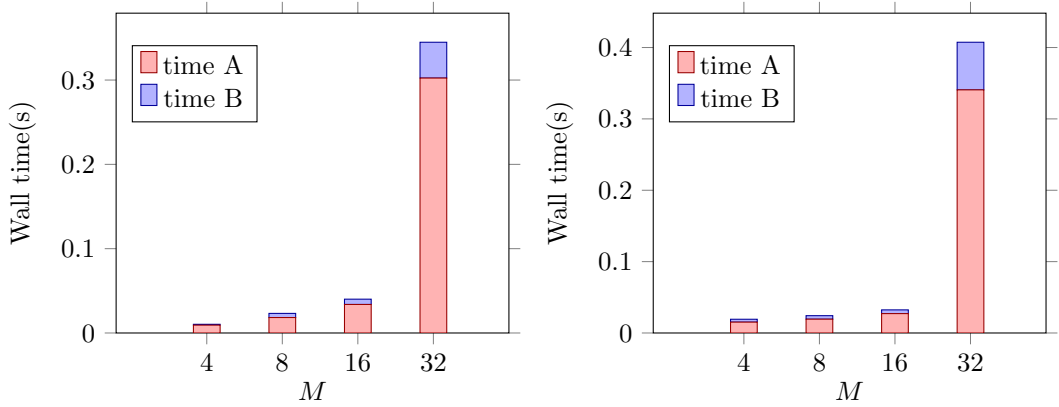
\begin{figure}[!htb]
\pgfplotsset{width=0.45\textwidth}
\centering
\begin{tikzpicture} 
\begin{axis}[
    xlabel=$M$, 
	ylabel=Wall time(s),
    ybar stacked,	
    tick align=outside, 
    legend style={at={(0.2,0.9)},anchor=north},
	xmin = -0.2, xmax = 5, ymin=0,
	xtick = {1,2,3,4},
	xticklabels={4,8,16,32},
    ]
	\addplot+ [ybar,red!30,draw=red!60!black] coordinates {
	(1,0.009331) (2,0.018329) (3,0.033828) (4,0.302353)
	};
	\addplot+ [ybar,blue!30,draw=blue!60!black] coordinates {
	(1,0.000976) (2,0.004910) (3,0.006285) (4,0.042480)
	};
	\legend{time A, time B}
\end{axis}
\end{tikzpicture}
\hskip 3pt
\begin{tikzpicture} 
\begin{axis}[
    xlabel=$M$, 
	ylabel=Wall time(s),
    ybar stacked,	
    tick align=outside, 
    legend style={at={(0.2,0.9)},anchor=north},
	xmin = -0.2, xmax = 5, ymin=0,
	xtick = {1,2,3,4},
	xticklabels={4,8,16,32},
    ]
	\addplot+ [ybar,red!30,draw=red!60!black] coordinates {
	(1,0.015227) (2,0.019362) (3,0.027163) (4,0.340672)
	};
	\addplot+ [ybar,blue!30,draw=blue!60!black] coordinates {
	(1,0.004024) (2,0.004903) (3,0.005222) (4,0.066722)
	};
	\legend{time A, time B}
\end{axis}
\end{tikzpicture}

\caption{The wall time to calculate the slip/jump coefficients
	for different $M$ in the HS model ($\alpha_n=\alpha_t=1$).
	Left: the viscous-slip coefficient $\zeta_0$.
	Right: the temperature-jump coefficient $\zeta_2.$
}
\label{fig:04}
\end{figure}



	\subsection{Explicit expressions}
In this subsetion, we focus on the CL scattering kernel 
and IPL potentials.
The moment model is analytical in the sense that i) the formal
solution of the layer equations \eqref{eq:KL} is available and
ii) the Hermite expansion of CL BCs is explicitly given. 
Therefore, when $M$ is small, we can write explicit expressions
about slip and jump coefficients. These formulae are 
functions of the ACs, i.e., $\alpha_t$ 
and $\alpha_n$ in the CL kernel. 
The coefficients in the formulae are different for different 
IPL intermolecular potentials.

First we consider the viscous-slip coefficient $\zeta_0$.
When $M=2$, the moment model gives the formula
\begin{equation}
	\label{eq:z01}
	\zeta_0 = \frac{2-\alpha_t}{\alpha_t}\frac{\sqrt{\pi}}{2}
\end{equation}
for all the IPL potentials. The formula \eqref{eq:z01} is independent
of the intermolecular potential and the accommodation coefficient
$\alpha_n$, agreeing with the literature.
\cite{Loyalka1967,Zhang2021} Compared with Tab.\ref{tab:01},
\eqref{eq:z01} would have a relatively large
deviation above $10\%$ in the HS model. To alleviate this deviation, 
Ref.\onlinecite{Loyalka1967,Klinc1972,Zhang2021} modify the formula as 
\begin{equation}
	\label{eq:z01-1}
	\zeta_0 = \frac{2-\alpha_t}{\alpha_t}\frac{\sqrt{\pi}}{2}
	\left(1+0.1366\alpha_t\right).
\end{equation}
While in Ref.\onlinecite{Lilley2007,Su2019}, the formula reads as
\begin{equation}
	\label{eq:z00}
	\zeta_0 = \frac{a}{\alpha_t}-b\alpha_t-c,	
\end{equation}
where $a,b$, and $c$ are fitting coefficients relying on
the intermolecular potential and $\alpha_n$.
It's shown \cite{Su2019} that the above fitting formula can
predict the LBE solutions well. 

In comparison, the moment model with $M=4$ gives a new 
explicit expression of $\zeta_0$. We omit the 
tedious process of the calculation here.
The expression is very concise when we 
define some auxiliary parameters as
\begin{equation}
	\label{eq:defm}
m_1=\frac{2-\alpha_t}{\alpha_t},\
	m_2=-1+\frac{2}{\alpha_t(\alpha_t^2-3\alpha_t+3)},\
m_3 = -1+\frac{2}{\alpha_n+\alpha_t-\alpha_n\alpha_t}.
\end{equation}
The viscous-slip coefficient reads as
\begin{equation}
\label{eq:z0m4}
\zeta_0 = \frac{\sqrt{\pi}}{2}m_1 
+ \frac{d_1m_2+d_2m_3+d_3m_2m_3}{c_1m_2+c_2m_3+c_3m_2m_3+c_4},
\end{equation}
where $c_i$ and $d_i$ are constants determined by the 
intermolecular potential. Tab.\ref{tab:04} shows their
values in some special cases. Note that there exists a freedom
about these coefficients, so we may as well assume $c_4=1$.
Compared to the formulae \eqref{eq:z01} and \eqref{eq:z00},
the formula \eqref{eq:z0m4} has stronger 
nonlinearity about $\alpha_n$ and $\alpha_t$. The
accuracy of \eqref{eq:z0m4} in the HS model is shown in Tab.\ref{tab:01},
where the relative errors are less than $3\%$ when $\alpha_t<1.$

\begin{table}[!htb]
\centering 
\caption{Coefficients in \eqref{eq:z0m4} for the IPL models}
\label{tab:04}
\begin{tabular}{cccccccc} 
	$\eta$ & $c_1$ & $c_2$ & $c_3$ & $c_4$ &
	$d_1$ & $d_2$ & $d_3$ \\ 
\hline
$\infty$\ (HS) & 0.9302 & 0.9861 & 0.9135 & 1 &
0.006073 & 0.1506 & 0.1423 \\
$5$\ (Maxwell molecules) & 0.9346 & 0.9868 & 0.9203 & 1 &
0 & 0.2216 & 0.2071 \\
$10$\ (Hard potential) & 0.9318 & 0.9856 & 0.9156 & 1 &
0.001965 & 0.1795 & 0.1673 \\
$3.1$\ (Soft potential) & 0.9363 & 0.9874 & 0.9230 & 1 &
0.005607 & 0.3026 & 0.2921 \\
\end{tabular}

\end{table}

The formula \eqref{eq:z0m4} successfully explains two
phenomena observed in numeric. I). A linear correction 
should be added to improve the accuracy of \eqref{eq:z01}
as in \eqref{eq:z00} and \eqref{eq:z01-1}.
Fig.\ref{fig:100} exhibits the dependence of 
$\zeta_0-\sqrt{\pi}m_1/2$ about $\alpha_t$ when $M=30$ and
$\alpha_n$ is fixed. We can see that the relation is almost linear.
In Ref.\onlinecite{Zhang2021}, the correction \eqref{eq:z01-1} for
the CL case is heuristically given without rigorous derivation.
Here, from Tab.\ref{tab:04}, we may
approximately let $c_i=1$, $d_1=0$ and $d_2=d_3$ 
in \eqref{eq:z0m4}, which gives
\[
\zeta_0	= \frac{\sqrt{\pi}}{2}m_1 + d_2\frac{m_3}{1+m_3}
	= \frac{\sqrt{\pi}}{2}m_1 + \frac{d_2}{2}
	(-(1-\alpha_n)\alpha_t+2-\alpha_n).
\]
So the prediction of \eqref{eq:z0m4} agrees with Fig.\ref{fig:100}
qualitatively.


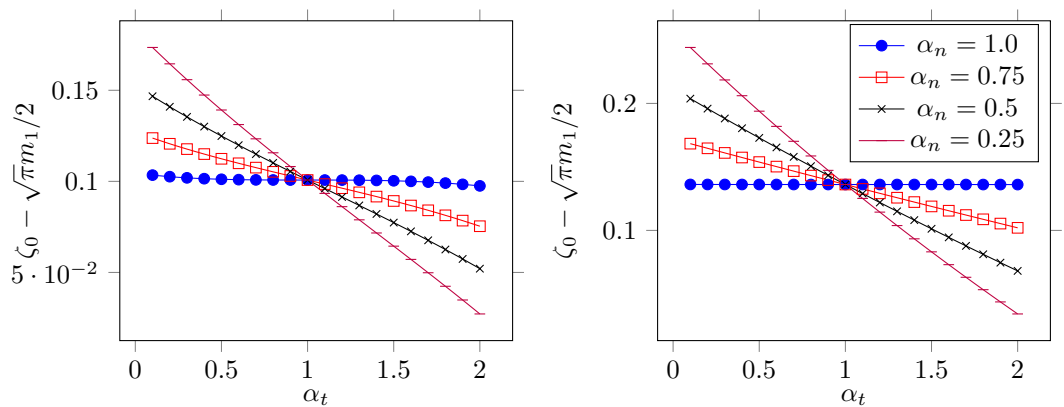
\begin{figure}[!htb]
\pgfplotsset{width=0.45\textwidth}
\centering
\begin{tikzpicture} 
\begin{axis}[
    ylabel=$\zeta_0-\sqrt{\pi}m_1/2$, 
	xlabel=$\alpha_t$, 
    tick align=outside, 
    legend style={at={(0.73,0.99)},anchor=north} 
    ]
\addplot[smooth,mark=*,blue] table {\dpath/zeta0_an1.00.dat};
\addplot[smooth,mark=square,red] table {\dpath/zeta0_an0.75.dat};
\addplot[smooth,mark=x,black] table {\dpath/zeta0_an0.50.dat};
\addplot[smooth,mark=-,purple] table {\dpath/zeta0_an0.25.dat};
\end{axis}
\end{tikzpicture}
\hskip 3pt
\begin{tikzpicture} 
\begin{axis}[
    ylabel=$\zeta_0-\sqrt{\pi}m_1/2$, 
	xlabel=$\alpha_t$, 
    tick align=outside, 
    legend style={at={(0.73,0.99)},anchor=north} 
    ]
\addplot[smooth,mark=*,blue] table {\dpath/5_zeta0_an1.00.dat};
\addlegendentry{$\alpha_n=1.0$}
\addplot[smooth,mark=square,red] table {\dpath/5_zeta0_an0.75.dat};
\addlegendentry{$\alpha_n=0.75$}
\addplot[smooth,mark=x,black] table {\dpath/5_zeta0_an0.50.dat};
\addlegendentry{$\alpha_n=0.5$}
\addplot[smooth,mark=-,purple] table {\dpath/5_zeta0_an0.25.dat};
\addlegendentry{$\alpha_n=0.25$}
\end{axis}
\end{tikzpicture}

\caption{The dependece of $\zeta_0-\sqrt{\pi}m_1/2$ about $\alpha_t$.
	Left: HS model. Right: Maxwell molecules.
}
\label{fig:100}
\end{figure}

The formula \eqref{eq:z0m4} also predicts that II). when
the first term $\sqrt{\pi}m_1/2$ dominates, the 
viscous-slip coefficient is insensitive to
the intermolecular potentials. However, when 
$\alpha_t$ is close to $2$, i.e., in the backward scattering case,
the latter term in \eqref{eq:z0m4} will dominate.
According to the values of $d_2$ and $d_3$ in
Tab.\ref{tab:04}, we know that now the IPL model with
$\eta=3.1$ may have a viscous-slip coefficient
twice as large as the HS model. In Tab.\ref{tab:e01},
we compare the viscous-slip coefficient $\zeta_0$ 
for different IPL potentials when $\alpha_t=0.25,0.5,1.75,2,
\ \alpha_n\in[0.25,1]$ and $M=50.$ We can 
see that the prediction of \eqref{eq:z0m4}
agrees with our numerical results.
The phenomena are also reported in the numerical
results of Ref.\onlinecite{Su2019}. 

\begin{table}[!htb] 
\centering 
\caption{Viscous-slip coefficient $\zeta_0$ for the Cercignani-Lampis
	BCs and IPL models ($M=50$)}
\label{tab:e01}
\begin{tabular}{cccccc} 
	$\alpha_t$ & $\eta$ & $\alpha_n=0.25$ & $\alpha_n=0.5$ &
	$\alpha_n=0.75$ & $\alpha_n=1$ \\ 
	\hline
0.25 & $\infty$ & 6.36519 & 6.34284 & 6.32364 & 6.30659 \\ 
 & 10 & 6.39080 & 6.36387 & 6.34040 & 6.31926 \\ 
 & 5 & 6.42968 & 6.39653 & 6.36717 & 6.34037 \\ 
 & 3.1 & 6.51376 & 6.46837 & 6.42748 & 6.38965 \\ 
0.5 & $\infty$ & 2.79906 & 2.78456 & 2.77193 & 2.76060 \\ 
 & 10 & 2.82086 & 2.80350 & 2.78811 & 2.77408 \\ 
 & 5 & 2.85360 & 2.83237 & 2.81319 & 2.79545 \\ 
 & 3.1 & 2.92349 & 2.89474 & 2.86824 & 2.84328 \\ 
1.75 & $\infty$ & 0.173077 & 0.192246 & 0.210003 & 0.226713 \\ 
 & 10 & 0.177461 & 0.199731 & 0.221007 & 0.241554 \\ 
 & 5 & 0.185033 & 0.211481 & 0.237526 & 0.263340 \\ 
 & 3.1 & 0.203235 & 0.237705 & 0.272854 & 0.308798 \\ 
2 & $\infty$ & 0.0274630 & 0.0525196 & 0.0759943 & 0.0982846 \\ 
 & 10 & 0.0303310 & 0.0592114 & 0.0871847 & 0.114498 \\ 
 & 5 & 0.0343861 & 0.0684421 & 0.102506 & 0.136708 \\ 
 & 3.1 & 0.0426832 & 0.0867627 & 0.132487 & 0.179964 \\ 
\end{tabular}

\end{table}

Then we consider the thermal-slip coefficient $\zeta_1$.
In Fig.\ref{fig:101}, we plot the relation of $\zeta_1$ 
about $\alpha_t$
when $M=30$ and $\alpha_n$ is fixed. 
We can see that $\zeta_1$ is nonlinear about $\alpha_t$ 
and its trends on $\alpha_t$ differ according to the
intermolecular potential. For example when $\alpha_n=1$
and $\alpha_t$ increases, $\zeta_1$ decreases in the HS model
but increases in the soft potential case where $\eta=3.1.$
This phenomenon is also reported in Ref.\onlinecite{Su2020}.

\begin{figure}[!htb]
\pgfplotsset{width=0.45\textwidth}
\centering
\begin{tikzpicture} 
\begin{axis}[
    ylabel=$\zeta_1$, 
	xlabel=$\alpha_t$, 
    tick align=outside, 
    legend style={at={(0.73,0.99)},anchor=north} 
    ]
\addplot[smooth,mark=*,blue] table {\dpath/zeta1_an1.00.dat};
\addplot[smooth,mark=square,red] table {\dpath/zeta1_an0.75.dat};
\addplot[smooth,mark=x,black] table {\dpath/zeta1_an0.50.dat};
\addplot[smooth,mark=-,purple] table {\dpath/zeta1_an0.25.dat};
\end{axis}
\end{tikzpicture}
\hskip 3pt
\begin{tikzpicture} 
\begin{axis}[
    ylabel=$\zeta_1$, 
	xlabel=$\alpha_t$, 
    tick align=outside, 
    legend style={at={(0.73,0.45)},anchor=north} 
    ]
\addplot[smooth,mark=*,blue] table {\dpath/5_zeta1_an1.00.dat};
\addplot[smooth,mark=square,red] table {\dpath/5_zeta1_an0.75.dat};
\addplot[smooth,mark=x,black] table {\dpath/5_zeta1_an0.50.dat};
\addplot[smooth,mark=-,purple] table {\dpath/5_zeta1_an0.25.dat};
\end{axis}
\end{tikzpicture}
\hskip 3pt
\begin{tikzpicture} 
\begin{axis}[
    ylabel=$\zeta_1$, 
	xlabel=$\alpha_t$, 
    tick align=outside, 
    legend style={at={(1.28,0.45)},anchor=north} 
    ]
\addplot[smooth,mark=*,blue] table {\dpath/3_zeta1_an1.00.dat};
\addlegendentry{$\alpha_n=1.0$}
\addplot[smooth,mark=square,red] table {\dpath/3_zeta1_an0.75.dat};
\addlegendentry{$\alpha_n=0.75$}
\addplot[smooth,mark=x,black] table {\dpath/3_zeta1_an0.50.dat};
\addlegendentry{$\alpha_n=0.5$}
\addplot[smooth,mark=-,purple] table {\dpath/3_zeta1_an0.25.dat};
\addlegendentry{$\alpha_n=0.25$}
\end{axis}
\end{tikzpicture}

\caption{The dependece of $\zeta_1$ about $\alpha_t$.
	Left: HS model. Right: Maxwell molecules.
	Bottom: $\eta=3.1.$
}
\label{fig:101}
\end{figure}
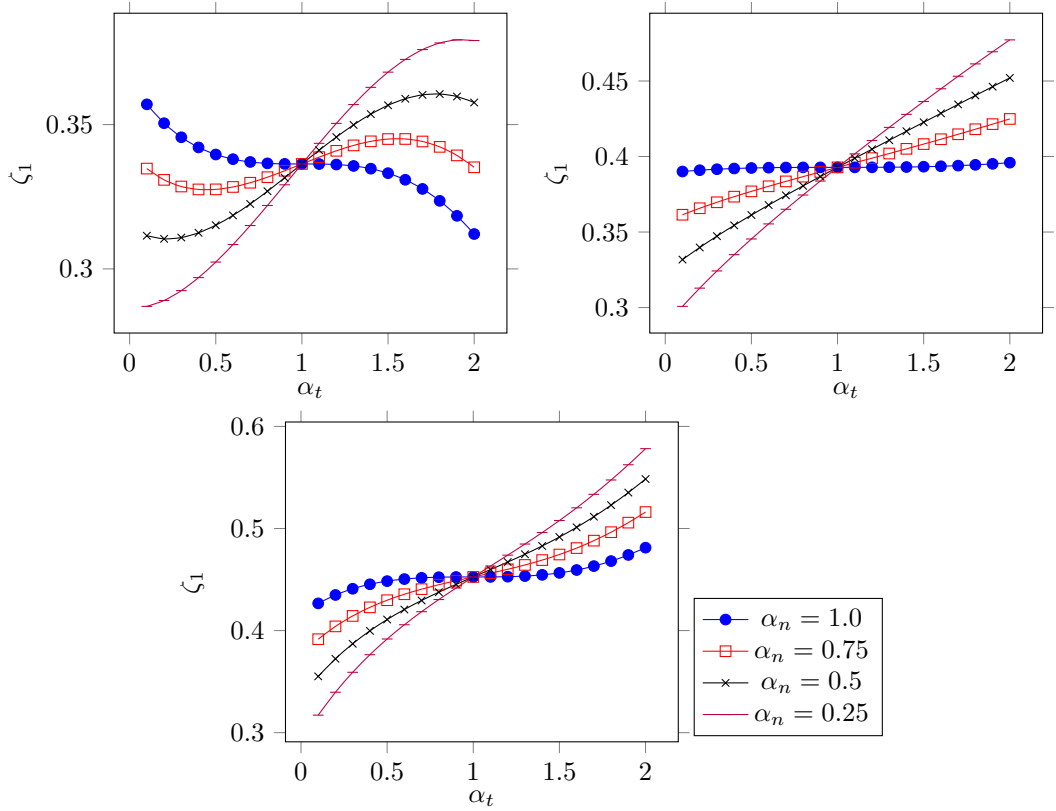

The trends are different from the Maxwell diffuse-specular case,
where $\zeta_1$ is linear about $\chi$ when the other conditions
are fixed. Here $\chi$ is the accommodation coefficient in the 
Maxwell accommodation BCs. For the Maxwell model, if
the scaled thermal-slip coefficient is defined as
\[
	\sigma_T = 	\frac{5}{2}\sqrt{\frac{2}{\pi}}
	\frac{\gamma_2}{\gamma_1}\zeta_1,
\]
then Maxwell gives the first approximation in history as
\[
	\sigma_T = \frac{3}{4},
\]
and Loyalka (cf. Ref.\onlinecite{Loyalka1989} and refs therein)
modified it as
\[
	\sigma_T = \frac{3}{4}\left(1+\frac{1}{2}\chi\right).
\]

For the CL scattering kernel, the explicit formula for
the thermal-slip coefficient is relatively rare. A formula
derived from the variational method for the HS model is given
by Ref.\onlinecite{NN2020}. For the moment model with $M=4$, 
we can derive the analytical expression of $\zeta_1$ as  
\begin{equation}
	\label{eq:z1-1}
	\zeta_1 = \frac{1}{4}+\frac{d_1m_2+d_2m_3+d_3}
		{c_1m_2+c_2m_3+c_3m_2m_3+c_4},
\end{equation}
where $m_2$ and $m_3$ are the same as in \eqref{eq:defm}.
The constant coefficients $c_i,\ d_i$ are determined by
intermolecular potentials, given in Tab.\ref{tab:05} for some
special cases. Here we let $c_4=1$ again.

\begin{table}[!htb]
\centering 
\caption{Coefficients in \eqref{eq:z1-1} for the IPL models}
\label{tab:05}
\begin{tabular}{cccccccc} 
	$\eta$ & $c_1$ & $c_2$ & $c_3$ & $c_4$ &
	$d_1$ & $d_2$ & $d_3$ \\ 
\hline
$\infty$\ (HS) & 0.9302 & 0.9861 & 0.9135 & 1 &
0.1892 & -0.03975 & 0.1476 \\
$5$\ (Maxwell molecules) & 0.9346 & 0.9868 & 0.9203 & 1 &
0.2336 & 0.01564 & 0.2500 \\
$10$\ (Hard potential) & 0.9318 & 0.9856 & 0.9156 & 1 &
0.2085 & -0.01632 & 0.1918 \\
$3.1$\ (Soft potential) & 0.9363 & 0.9874 & 0.9230 & 1 &
0.2751 & 0.07147 & 0.3484 \\
\end{tabular}

\end{table}

Compared with Ref.\onlinecite{NN2020}, the formula \eqref{eq:z1-1} is much
simpler in form and suitable for all IPL potentials. As shown
in Tab.\ref{tab:02}, the formula would
give accurate thermal-slip coefficients with the relative errors
less than $3\%$ in the HS case.
The formula \eqref{eq:z1-1} can also
illustrate the phenomenon mentioned
in the discussion of Fig.\ref{fig:101}.
When $\alpha_n=1$, we have $m_3=1,$ which results in
\[
\zeta_1 = \frac{1}{4}+\frac{d_1m_2+d_2+d_3}
		{(c_1+c_3)m_2+c_2+c_4}.
\]
Apparently, if $d_1>0$ and $c_1+c_3>0$, the above $\zeta_1$
is an increasing function about $m_2$ when
\[
	\frac{d_2+d_3}{d_1} < \frac{c_2+c_4}{c_1+c_3}.
\]
According to Tab.\ref{tab:05}, we can check that it is the case
of the HS model and the hard potential with $\eta=10$. While for the 
Maxwell molecules and the soft potential with $\eta=3.1$, the opposite
inequality holds, and $\zeta_1$ is a decreasing function about
$m_2$. Since $m_2$ is a decreasing function about $\alpha_t$,
when $\alpha_n=1$, we have $\zeta_1$ decreasing about $\alpha_t$
in the HS and $\eta=10$ case,
while increasing for the Maxwell molecules and $\eta=3.1$ case.
Unlike \eqref{eq:z0m4}, there seems no dominant term in
\eqref{eq:z1-1} and the dependence of $\zeta_1$ on different
IPL potentials is relatively complicated.
Thermal-slip coefficients for different IPL potentials with $M=50$
are shown in Tab.\ref{tab:e02}. 

\begin{table}[!htb]
\centering 
\caption{Thermal-slip coefficient $\zeta_1$ for the Cercignani-Lampis
	BCs and IPL models ($M=50$)}
\label{tab:e02}
\begin{tabular}{cccccc} 
	$\alpha_t$ & $\eta$ & $\alpha_n=0.25$ & $\alpha_n=0.5$ &
	$\alpha_n=0.75$ & $\alpha_n=1$ \\ 
\hline
0.25 & $\infty$ & 0.290853 & 0.310756 & 0.329846 & 0.348224 \\ 
 & 10 & 0.302713 & 0.324813 & 0.346144 & 0.366798 \\ 
 & 5 & 0.318814 & 0.343789 & 0.368016 & 0.391574 \\ 
 & 3.1 & 0.349837 & 0.380177 & 0.409672 & 0.438373 \\ 
0.5 & $\infty$ & 0.302666 & 0.315457 & 0.327896 & 0.339998 \\ 
 & 10 & 0.321031 & 0.335170 & 0.349038 & 0.362633 \\ 
 & 5 & 0.345619 & 0.361500 & 0.377187 & 0.392660 \\ 
 & 3.1 & 0.391936 & 0.411003 & 0.429919 & 0.448637 \\ 
1.75 & $\infty$ & 0.377644 & 0.360983 & 0.343752 & 0.326151 \\ 
 & 10 & 0.412165 & 0.394188 & 0.375274 & 0.355639 \\ 
 & 5 & 0.457493 & 0.437730 & 0.416661 & 0.394473 \\ 
 & 3.1 & 0.540527 & 0.517270 & 0.492286 & 0.465637 \\ 
2 & $\infty$ & 0.379474 & 0.358082 & 0.335650 & 0.312491 \\ 
 & 10 & 0.422007 & 0.399075 & 0.374592 & 0.348866 \\ 
 & 5 & 0.477440 & 0.452313 & 0.425143 & 0.396153 \\ 
 & 3.1 & 0.578318 & 0.548573 & 0.516248 & 0.481319 \\ 
\end{tabular}

\end{table}

Finally we focus on the temperature-jump coefficient $\zeta_2.$
For the Maxwell diffuse-specular BCs, $\zeta_2$ has explicit
formulae like \eqref{eq:z00}, i.e., a linear combination of
$(2-\chi)/\chi,\ \chi$ and $1$. Corresponding formulae are given by
Maxwell, \cite{Maxwell} Welander, \cite{Welander1954} and
Loyalka. \cite{Loyalka1978} There is less work about
expressions of $\zeta_2$ with the IPL intermolecular
potentials and CL BCs.

When $M=3$, we derive the analytical expression of $\zeta_2$
from the moment model. In this case, $\zeta_2$ is independent
of the intermolecular potential and reads as
\begin{equation}
	\label{eq:z2}
	\zeta_2 = \frac{\displaystyle\frac{\sqrt{15}}{5}\left(\frac{9}{4}n_1
	+n_2\right)+\frac{5\sqrt{2}}{8}\sqrt{2\pi}n_1n_2}
	{n_1+n_2+\displaystyle\sqrt{\frac{30}{2\pi}}},
\end{equation}
where
\begin{equation}
	\label{eq:defn}
	n_1 = \frac{2-\alpha_n}{\alpha_n},\quad
	n_2 = -1+\frac{2}{\alpha_t(2-\alpha_t)}.
\end{equation}
The formula \eqref{eq:z2} is similar as the expressions 
in Ref.\onlinecite{Struch2013,Zhang2021}. It shows that $\zeta_2$
would increase when $\alpha_n$ and $\alpha_t$ decrease.
This coincides with the numerical results in Fig.\ref{fig:301},
where the dependence of $\zeta_2$ on $\alpha_n$ and $\alpha_t$
is exhibited for the HS model when $M=31$. 

\begin{figure}[!htb]
\pgfplotsset{width=0.45\textwidth}
\centering
\begin{tikzpicture} 
\begin{axis}[
    ylabel=$\zeta_2$, 
	xlabel=$\alpha_t$, 
    tick align=outside, 
    legend style={at={(0.73,0.99)},anchor=north} 
    ]
\addplot[smooth,mark=*,blue] table {\dpath/zeta2_an1.00.dat};
\addlegendentry{$\alpha_n=1.0$}
\addplot[smooth,mark=square,red] table {\dpath/zeta2_an0.75.dat};
\addlegendentry{$\alpha_n=0.75$}
\addplot[smooth,mark=x,black] table {\dpath/zeta2_an0.50.dat};
\addlegendentry{$\alpha_n=0.5$}
\addplot[smooth,mark=-,purple] table {\dpath/zeta2_an0.25.dat};
\addlegendentry{$\alpha_n=0.25$}
\end{axis}
\end{tikzpicture}
\hskip 3pt
\begin{tikzpicture} 
\begin{axis}[
    ylabel=$\zeta_2$, 
	xlabel=$\alpha_n$, 
    tick align=outside, 
    legend style={at={(0.73,0.99)},anchor=north} 
    ]
\addplot[smooth,mark=*,blue] table {\dpath/zeta2_at1.00.dat};
\addlegendentry{$\alpha_t=1.0$}
\addplot[smooth,mark=square,red] table {\dpath/zeta2_at0.75.dat};
\addlegendentry{$\alpha_t=0.75$}
\addplot[smooth,mark=x,black] table {\dpath/zeta2_at0.50.dat};
\addlegendentry{$\alpha_t=0.5$}
\addplot[smooth,mark=-,purple] table {\dpath/zeta2_at0.25.dat};
\addlegendentry{$\alpha_t=0.25$}
\end{axis}
\end{tikzpicture}

\caption{The dependece of $\zeta_2$ in the HS model.
}
\label{fig:301}
\end{figure}
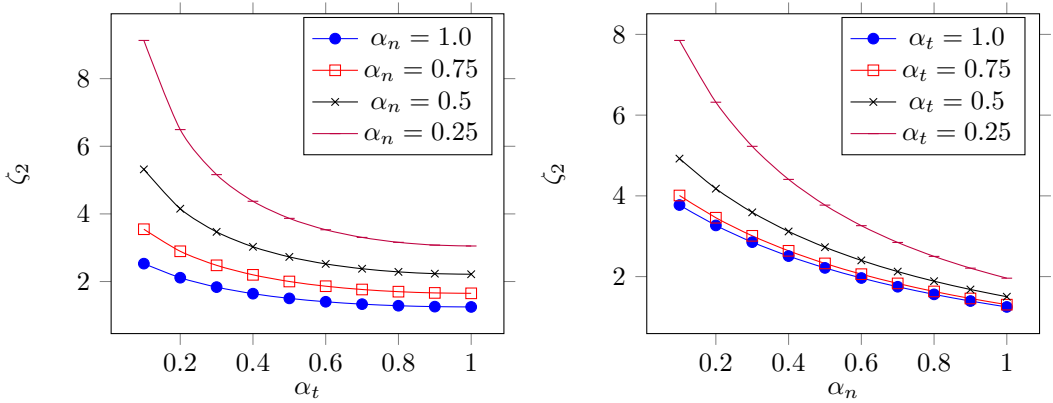

\begin{table}[!htb]
\centering 
\caption{Temperature-jump coefficient $\zeta_2$ 
	for different IPL models ($M=51$)}
\label{tab:07}
\begin{tabular}{ccccccc} 
	  $\alpha_t$ & $\eta$ & $\alpha_n=0$ & $\alpha_n=0.25$ 
	  & $\alpha_n=0.5$ & $\alpha_n=0.75$ & $\alpha_n=1$ \\ 
	 \hline
0.25 & $\infty$ & 10.1558 & 5.73315 & 3.77200 & 2.66686 & 1.96228 \\ 
& 10 & 10.2440 & 5.78167 & 3.80694 & 2.69542 & 1.98856 \\ 
& 5 & 10.3696 & 5.85382 & 3.85960 & 2.73847 & 2.02783 \\ 
& 3.1 &10.6213 & 6.00574 & 3.97211 & 2.83055 & 2.11117 \\ 
0.5 & $\infty$ & 5.90485 & 3.87043 & 2.72920 & 2.00205 & 1.50267 \\ 
& 10 & 5.98305 & 3.91934 & 2.76441 & 2.03034 & 1.52837 \\ 
& 5 & 6.09432 & 3.99076 & 2.81680 & 2.07283 & 1.56697 \\ 
& 3.1 &6.31497 & 4.13759 & 2.92688 & 2.16318 & 1.64920 \\ 
0.75 & $\infty$ & 4.70607 & 3.22518 & 2.32781 & 1.72934 & 1.30606 \\ 
& 10 & 4.77847 & 3.27307 & 2.36293 & 1.75777 & 1.33201 \\ 
& 5 & 4.88165 & 3.34261 & 2.41483 & 1.80028 & 1.37094 \\ 
& 3.1 &5.08564 & 3.48431 & 2.52300 & 1.89021 & 1.45378 \\ 
1.0 & $\infty$ & 4.40516 & 3.05304 & 2.21688 & 1.65225 & 1.24962 \\ 
& 10 & 4.47565 & 3.10043 & 2.25189 & 1.68071 & 1.27568 \\ 
& 5 & 4.57622 & 3.16918 & 2.30354 & 1.72322 & 1.31477 \\ 
& 3.1 &4.77494 & 3.30897 & 2.41098 & 1.81303 & 1.39792 \\ 
\end{tabular}

\end{table}

Tab.\ref{tab:07} shows that the temperature-jump coefficient
$\zeta_2$ is insensitive to the IPL potentials. To improve
the accuracy of \eqref{eq:z2}, a possible way is to write
the expression of $\zeta_2$ when $M=5$. However, the expression
is too complicated to explicitly write down. 
An alternative method is to assume
\[
	\zeta_2 = \frac{d_1n_1+d_2n_2+d_3n_1n_2}{c_1n_1+c_2n_2+1},
\]
where $c_i$ and $d_i$ are fitting coefficients to be determined
by intermolecular potentials. 

The data fitting is also promising to improve the accuracy of
\eqref{eq:z0m4} and \eqref{eq:z1-1}. From this point of view,
the analytical solutions to the moment model are instructive
and help us find the critical parameters that affect slip and
jump coefficients, i.e., $m_i$ and $n_i$ in \eqref{eq:defm} and
\eqref{eq:defn}.
It may deserve a detailed discussion somewhere else about 
these fitting formulae compared with the experimental and MD data. 
The issue should be future work and beyond the scope of this paper.

	\section{Conclusions}
	Our main intention was to develop the moment method of arbitrary
order for describing rarefied gas effects due to the general
gas-surface interaction. Utilizing the Hermite expansion, 
we derived the layer equations with general boundary conditions
in the frame of the moment method. These moment systems are
proved solvable after a simple boundary stabilization.
In particular, we discussed the Cercignani-Lampis scattering
kernel and gave a recursion formula to calculate its Hermite
expansion. This 
procedure avoids numerical integration
and helps us find explicit expressions for slip and jump
coefficients in terms of the accommodation coefficients.

Based on the moment model, we analyzed and evaluated
viscous-slip, thermal-slip, and temperature-jump coefficients for the 
inverse-power-law intermolecular potentials and 
Cercignani-Lampis boundary conditions. 
As shown in numerical tests, our moment model can capture
slip and jump coefficients accurately and efficiently
with mild moments. For low-order moment models,
explicit expressions of slip and jump coefficients
about the accommodation coefficients were derived.
These formulae are nonlinear, accurate, and concise in form, which
successfully explain some reported effects of the 
accommodation coefficients 
and intermolecular potentials.

	\appendix
\section{Orders of multi-indices}
\label{app:A}
\begin{defn}\label{def:order}
	We define the ordering $\preceq$ on $\bbN^D$ as follows.
	For $\+\alpha,\+\beta\in\bbN^D$, 
\begin{enumerate}
	\item If $\alpha_2$ is even and $\beta_2$ is odd, then 
		$\+\alpha \preceq \+\beta$.
	\item If $\alpha_2$ and $\beta_2$ have the same parity, but
		$|\+\alpha|<|\+\beta|$, then $\+\alpha \preceq
		\+\beta$.
	\item If $\alpha_2$ and $\beta_2$ have the same parity and
		$|\+\alpha|=|\+\beta|$, but there exists a smallest 
		$1\leq i\leq D$ such that $\alpha_i\neq\beta_i$, then
		$\+\alpha \preceq \+\beta$ if and only if 
		$\alpha_i \geq \beta_i$. 
	\end{enumerate}
\end{defn}
As usual, $\+\alpha\prec\+\beta$ means $\+\alpha\preceq\+\beta$
and $\+\alpha\neq\+\beta.$ In the above definition, a special
feature is that the indices with an even second component
are always ordered before the odd ones, e.g., the index
$(a_1,0,a_3)$ is ordered before $(b_1,1,b_3)$ for 
any $a_1,a_3,b_1$ and $b_3$. Except for that point, 
the multi-indices are first sorted by the multi-index norm and
then by the anti-lexicographic order.

If $\mathbb{I}$ is a subset of $\bbN^D$, then the ordering of
$\mathbb{I}$ is naturally defined as the restriction of $\preceq$
to $\mathbb{I}$. If $\mathbb{I}\subset\bbN^D$ is finite
with $\#\mathbb{I}$ elements, then
$\mathbb{I}$ is isomorphic to $\{1,2,\cdots,\#\mathbb{I}\}$.
In this paper, we define the default isomorphism 
$\mathcal{N}:\mathbb{I}\rightarrow \{1,2,\cdots,\#\mathbb{I}\}$ 
by the ordering $\preceq$, i.e., 
$\tn{\+\alpha}\leq\tn{\+\beta}$ if and only if $\+\alpha\preceq
\+\beta.$

Here, if a vector $\+w$ is called induced from $\mathbb{I}$,
we mean that the length of $\+w$ is $\#\mathbb{I}$ and we use
$\+w[\+\alpha]$ to represent its $\tn{\+\alpha}$-th element
where $\+\alpha\in\mathbb{I}.$ Analogously, if a matrix $\+A$
is called induced from $\mathbb{I}_1\times\mathbb{I}_2$, then
the size of $\+A$ is $(\#\mathbb{I}_1)\times(\#\mathbb{I}_2)$.
For $\+\alpha\in\mathbb{I}_1$ and $\+\beta\in\mathbb{I}_2$,
we use $\+A[\+\alpha,\+\beta]$ to represent its entry in the
$\tni{\+\alpha}$-th row and $\tnj{\+\beta}$-th column, where
$\mathcal{N}_i,i=1,2,$ are default isomorphic functions 
for $\mathbb{I}_i.$ 
 
\section{Calculation of $S_0(\alpha_2,\beta_2)$}
\label{app:B}
From \eqref{eq:def_S0} and the recursion relation 
\eqref{eq:defH}, we have
\begin{eqnarray}\label{eq:aS}
	S_0(\alpha_2,\beta_2) 
	&=& \int_0^{+\infty}\!\!
	\left(\sqrt{\alpha_2}\phi_{\alpha_2-1}+\sqrt{\alpha_2+1}
	\phi_{\alpha_2+1}\right)\phi_{\beta_2}
	\omega_0(\xi_2)\, \mathrm{d}\xi_2 \\
	\notag
	&=& \sqrt{\alpha_2}I(\alpha_2-1,\beta_2)
	+\sqrt{\alpha_2+1}I(\alpha_2+1,\beta_2),
\end{eqnarray}
where we define
\begin{eqnarray*}
I(\alpha_2,\beta_2) = \int_0^{+\infty}\!\!
	\phi_{\alpha_2} \phi_{\beta_2} \omega_0(\xi_2)
   	\mathrm{d}\xi_2.
\end{eqnarray*}

According to integration by parts, noting that
\begin{equation}
	\label{eq:inp}
	\od{}{\xi_2}\left(\omega_0(\xi_2)\phi_{\beta_2}\right)=
	-\sqrt{\beta_2+1}\omega_0(\xi_2)\phi_{\beta_2+1},\quad
	\od{}{\xi_2}\phi_{\alpha_2+1} =
	\sqrt{\alpha_2+1}\phi_{\alpha_2},
\end{equation}
we have
	\begin{eqnarray}
		\label{eq:BI1}
	I(\alpha_2+1,\beta_2+1) = 
		\left(\phi_{\alpha_2+1}(0)
			   	\phi_{\beta_2}(0)\omega_0(0)+
		\sqrt{\alpha_2+1}I(\alpha_2,\beta_2)\right)/\sqrt{\beta_2+1}. 
	\end{eqnarray}
The symmetry gives $I(\alpha_2,\beta_2)=I(\beta_2,\alpha_2)$.
So we also have
	\begin{eqnarray}
		\label{eq:BI2}
	I(\alpha_2+1,\beta_2+1) = 
		\left(\phi_{\beta_2+1}(0)
			   	\phi_{\alpha_2}(0)\omega_0(0)+
		\sqrt{\beta_2+1}I(\alpha_2,\beta_2)\right)/\sqrt{\alpha_2+1}. 
	\end{eqnarray}

When $\alpha_2\neq\beta_2$, we can equal \eqref{eq:BI1} and
\eqref{eq:BI2} to obtain the explicit expression 
of $I(\alpha_2,\beta_2).$ For simplicity, we denote by
$z_{n}=\phi_{n}(0)$. Then the recursion relation \eqref{eq:defH}
gives 
\[
	z_{n+1}=-\sqrt{n}z_{n-1}/\sqrt{n+1}
\]
with $z_0=1$ and $z_1=0$. We also have
$\omega_0(0)=(\sqrt{2\pi})^{-1}$.
When $\alpha_2=\beta_2$, noting that $z_n=0$ when $n$ is odd,
we have $I(\alpha_2+1,\alpha_2+1)=I(\alpha_2,\alpha_2)$
with $I(0,0)=1/2.$
From \eqref{eq:aS}, we can finally write the expressions of
$S_0(\alpha_2,\beta_2).$

\section{Calculation of $T(\alpha_1,\beta_1)$}
\label{app:C}
Assume the Fubini theorem holds such that we can exchange the
integration order. We will calculate $T(\alpha_1,\beta_1)$ by
induction. Let $\alpha_1=0$, then the orthogonality of Hermite
polynomials gives
\begin{eqnarray*}
	T(0,\beta_1) = \int_{\bbR}\!\!\phi_{\beta_1}(\xi_1')
	\omega_0(\xi_1')\,\mathrm{d}\xi_1'
	= \delta_{\beta_1,0}.
\end{eqnarray*}
Integrate by parts with \eqref{eq:inp}
and use the recursion relation \eqref{eq:defH},
then we have
\begin{eqnarray*}
	Y_{\alpha_1} &\triangleq&
\int_{\bbR}\!\!\phi_{\alpha_1}(\xi_1)
   	\exp\left(-\frac{|\xi_1-(1-\alpha_t)\xi_1'|^2}
			{2\alpha_t(2-\alpha_t)}\right)\,\mathrm{d}\xi_1 \\
	&=& \frac{1}{\sqrt{\alpha_1+1}}
\int_{\bbR}\!\!\frac{\xi_1-(1-\alpha_t)\xi_1'}
{\alpha_t(2-\alpha_t)}\phi_{\alpha_1+1}(\xi_1)
   	\exp\left(-\frac{|\xi_1-(1-\alpha_t)\xi_1'|^2}
			{2\alpha_t(2-\alpha_t)}\right)\,\mathrm{d}\xi_1 \\
	&=& \frac{1}{\sqrt{\alpha_1+1}}
\int_{\bbR}\!\!\frac{-(1-\alpha_t)\xi_1'\phi_{\alpha_1+1}+
\sqrt{\alpha_1+1}\phi_{\alpha_1}+\sqrt{\alpha_1+2}\phi_{\alpha_1+2}}
{\alpha_t(2-\alpha_t)}
   	\exp\left(-\frac{|\xi_1-(1-\alpha_t)\xi_1'|^2}
			{2\alpha_t(2-\alpha_t)}\right)\,\mathrm{d}\xi_1
\end{eqnarray*}
Collecting the terms, we have
\begin{eqnarray}
	\label{eq:recY}
\sqrt{\alpha_1}Y_{\alpha_1} = -(1-\alpha_t)^2\sqrt{\alpha_1-1}Y_{\alpha_1-2}
+(1-\alpha_t)\xi_1'Y_{\alpha_1-1},\quad \alpha_1\geq 2,
\end{eqnarray}
with $Y_0=1$ and $Y_1=(1-\alpha_t)\xi_1'.$
Using the recursion relation \eqref{eq:defH} to get rid of
$\xi_1'$ in \eqref{eq:recY} and recalling the definition 
of $T(\alpha_1,\beta_1)$, we have the recursion formula
\begin{eqnarray}
	\label{eq:recT}
	T(\alpha_1,\beta_1) &=& -(1-\alpha_t)^2\frac{\alpha_1-1}{\alpha_1}
	T(\alpha_1-2,\beta_1) \notag \\ &&
	+(1-\alpha_t)\frac{\sqrt{\beta_1}T(\alpha_1-1,
		\beta_1-1)+\sqrt{\beta_1+1}T(\alpha_1-1,\beta_1+1)}
{\sqrt{\alpha_1}}.
\end{eqnarray}
Here we let $T(-1,\beta_1)=0$ and get
\[
	T(1,\beta_1) = \int_{\bbR}\!\!\phi_{\beta_1}(\xi_1')
	\omega_0(\xi_1')Y_1\,\mathrm{d}\xi_1'
	= (1-\alpha_t)\delta_{\beta_1,1}.
\]
By induction, we can see that 
$T(\alpha_1,\beta_1)=0$ when $\beta_1>\alpha_1$. 

However, by definition, we have the symmetry 
\[
	T(\alpha_1,\beta_1) = T(\beta_1,\alpha_1).
\]
So $T(\alpha_1,\beta_1)=0$ when $\alpha_1>\beta_1,$
too. Thus, the recursion relation \eqref{eq:recT} becomes
\[
T(\alpha_1,\alpha_1) = 	(1-\alpha_t)T(\alpha_1-1,\alpha_1-1).
\]
We finally have $T(\alpha_1,\beta_1)=\delta_{\alpha_1,\beta_1}
(1-\alpha_t)^{\alpha_1}.$

\section{Proof of Lemma \ref{lem:03}}
\label{app:S}
Firstly, we introduce some preliminaries.
The series expansion of the zeroth order
modified Bessel function (8.447 in Ref.\onlinecite{GR2014}) is
\[
	I_0\left(\frac{\sqrt{1-\alpha_n}}{\alpha_n}\xi_2\xi_2'\right)
	= \sum_{k=0}^{\infty} \frac{(1-\alpha_n)^k}{(2\alpha_n)^{2k}}
	\frac{1}{(k!)^2}\xi_2^{2k}\xi_2'^{2k}.
\]
Denote by
\[
	J(\alpha_2,k) = \frac{1}{\alpha_n^{k+1}(2k)!!}
	\int_0^{+\infty}\!\!\xi_2^{2k+1}
	\exp\left(-\frac{\xi_2^2}{2\alpha_n}\right)\phi_{\alpha_2}(\xi_2)
	\,\md\xi_2.
\]
Then we can write
\begin{equation}
	\label{eq:NconJ}
	N(\alpha_2;\xi_2') = \frac{1}{\sqrt{2\pi}}
	\sum_{k=0}^{\infty}\frac{1}{2^kk!}\left(
	\frac{1-\alpha_n}{\alpha_n}\right)^kJ(\alpha_2,k)\xi_2'^{2k}.
\end{equation}
On the other hand, the power series expansion of the
exponential function gives
\begin{equation}
	\label{eq:seH}
	\mH(\xi_2')\triangleq
	\frac{1}{\sqrt{2\pi}}\exp\left(-\frac{1}{2}
	\left(1-\frac{1}{\alpha_n}\right)\xi_2'^2\right)
	= \frac{1}{\sqrt{2\pi}}\sum_{k=0}^{\infty}
	\frac{1}{k!}\left(\frac{1-\alpha_n}{2\alpha_n}\right)^k\xi_2'^{2k}.
\end{equation}

Secondly, we calculate $J(\alpha_2,k)$ for even $\alpha_2$.
According to the formula 7.376 in Ref.\onlinecite{GR2014}, we have
\[
	J(2n,k) =
	\frac{(-1)^n2^n\Gamma(n+\frac{1}{2})}{\sqrt{\pi}\sqrt{(2n)!}}
	F(-n,k+1;\frac{1}{2},\alpha_n),
\]
where $n,k\in\bbN$. Here we use the Gamma function, i.e.,
\[
	\Gamma(n+\frac{1}{2}) = (n-\frac{1}{2})(n-\frac{3}{2})\cdots
	\frac{1}{2}\sqrt{\pi}
\]
and the hypergeometric function, i.e., 
\[
	F(-n,k+1;\frac{1}{2},\alpha_n) = 1 
	+ \frac{(-n)}{\frac{1}{2}}C_{k+1}^1\alpha_n +
	\frac{(-n)(-n+1)}{\frac{1}{2}\frac{3}{2}}C_{k+2}^2
	\alpha_n^2 + \cdots + \frac{(-1)^n n!}
	{\frac{1}{2}\frac{3}{2}\cdots(\frac{1}{2}+n-1)}C_{k+n}^n
	\alpha_n^n,
\]
where the combination number $C_{k+n}^n=(k+1)\cdots(k+n)/n!.$
So we have $J(0,k)=1$. Let $J(-2,k)=0$.
According to Gauss's recursion 
functions (9.137 in Ref.\onlinecite{GR2014}), we have the following
recursion relation about $J(2n,k)$:
\begin{eqnarray} \notag 
	\sqrt{(2n+1)(2n+2)}J(2n+2,k) &=& 
	(\alpha_n-1)\sqrt{(2n)(2n-1)}J(2n-2,k) \\ && +
	(-4n-1+(2n+2)\alpha_n+2k\alpha_n)J(2n,k).
	\label{eq:recF}
\end{eqnarray}

Then we calculate $N(\alpha_2,\xi_2')$ by induction.
Since $J(0,k)=1$, by comparing \eqref{eq:NconJ} and \eqref{eq:seH},
we have
\[
	N(0;\xi_2') = \mH(\xi_2').
\]
Take the derivatives about $\xi_2'$ in \eqref{eq:seH}, and we have
\begin{eqnarray*}
	\xi_2'\od{}{\xi_2'}\mH(\xi_2') &=&
	-(1-\frac{1}{\alpha_n})\xi_2'^2\mH(\xi_2')\\
	&=& \frac{1}{\sqrt{2\pi}}\sum_{k=0}^{\infty}
	(2k)\frac{1}{k!}\left(\frac{1-\alpha_n}{2\alpha_n}
	\right)^k\xi_2'^{2k}.
\end{eqnarray*}
So combined with \eqref{eq:recF}, we have 
\[
	\sqrt{2}N(2;\xi_2') = (-1+2\alpha_n)N(0;\xi_2')
	- (\alpha_n-1)\xi_2'^2N(0;\xi_2').
\]
This inspires us to guess that $N(\alpha_2;\xi_2')$
can be represented as the product of $\mH(\xi_2')$ and
a polynomial of degree $\alpha_2$.
By induction principle, assuming that for any $m\leq n$,
there exist coefficients $r_{2m,2s}$ such that
\begin{equation}
	\label{eq:forN}
	N(2m;\xi_2') = \mH(\xi_2')\sum_{s=0}^{m} r_{2m,2s}
	\phi_{2s}(\xi_2'),
\end{equation}
where $\phi_{2s}(\xi_2')$ are Hermite polynomials as in 
Lemma \ref{lem:03}. Then taking the derivatives 
in \eqref{eq:NconJ} for $\alpha_2=2n$, we have
\begin{eqnarray*}
	\frac{1}{\sqrt{2\pi}}
	\sum_{k=0}^{\infty}\frac{1}{2^kk!}\left(
	\frac{1-\alpha_n}{\alpha_n}\right)^k(2k)J(2n,k)\xi_2'^{2k}
	&=& \xi_2'\od{}{\xi_2'}N(2n;\xi_2').
\end{eqnarray*}
Combined with the recursion relation \eqref{eq:recF} and
\eqref{eq:NconJ}, we have
\begin{eqnarray*}
	\sqrt{(2n+1)(2n+2)}N(2n+2,\xi_2')
	&=& \sqrt{(2n)(2n-1)}(\alpha_n-1)N(2n-2,\xi_2') \\
	&& +((2n+2)\alpha_n-4n-1)N(2n,\xi_2')
	+ \alpha_n\xi_2'\od{}{\xi_2'}N(2n;\xi_2').
\end{eqnarray*}
Hence, $N(2n+2,\xi_2')$ also has the form as in \eqref{eq:forN}.
Using \eqref{eq:inp}, the final recursion
relation of $r_{\alpha_2,\beta_2}$ is
\begin{eqnarray}
	&&\sqrt{(2n+1)(2n+2)}\sum_{s=0}^{n+1}r_{2n+2,2s}\phi_{2s}(\xi_2')
	\notag \\ \notag
	&=& \sqrt{(2n)(2n-1)}(\alpha_n-1)\sum_{s=0}^{n-1}
	r_{2n-2,2s}\phi_{2s}(\xi_2') +((2n+2)\alpha_n-4n-1)\sum_{s=0}^n
	r_{2n,2s}\phi_{2s}(\xi_2') \\ \label{eq:recR}
	&& + \sum_{s=1}^{n}r_{2n,2s}\sqrt{(2s)(2s-1)}\phi_{2s-2}(\xi_2')
	+ \sum_{s=0}^{n}r_{2n,2s}(2s)\phi_{2s}(\xi_2') \\
	&& +(1-\alpha_n)\sum_{s=0}^n r_{2n,2s}\left(
	(2s+1)\phi_{2s}(\xi_2')+\sqrt{(2s+1)(2s+2)}
	\phi_{2s+2}(\xi_2')\right). \notag
\end{eqnarray}

In conclusion, we have the following results:
(1). $N(2n,\xi_2')$ has the assumed form \eqref{eq:forN}.
(2). The recursion relation of $r_{2n,2s}$ is given
by comparing the coefficients before Hermite
polynomials in \eqref{eq:recR}, with the initial
values $r_{-2,2s}=0$ and $r_{0,2s}=\delta_{s,0}.$
(3). In particular, from \eqref{eq:recR},
the coefficients $r_{2n,2n}=(1-\alpha_n)^n.$

	\bibliography{../ATC}

\end{CJK*}
\end{document}
